\newtheorem{theorem}{Theorem}
\newtheorem{definition}{Definition}
\begin{document}
\title{AsyncSC: An Asynchronous Sidechain for Multi-Domain Data Exchange in Internet of Things}

\author{\IEEEauthorblockN{Lingxiao Yang, Xuewen Dong\textsuperscript{\Envelope}\thanks{\textsuperscript{\Envelope} Corresponding author: \textit{Xuewen Dong} (xwdong@xidian.edu.cn).}, Zhiguo Wan, Sheng Gao, Wei Tong, Di Lu, Yulong Shen, Xiaojiang Du}
	\IEEEcompsocitemizethanks{
		\IEEEcompsocthanksitem \textit{Lingxiao Yang} and \textit{Xuewen Dong} are with the School of Computer Science and Technology, Xidian University, the Engineering Research Center of Blockchain Technology Application and Evaluation, Ministry of Education, and also with the Shaanxi Key Laboratory of Blockchain and Secure Computing, Xi’an 710071, China. \textit{Zhiguo Wan} is with the Zhejiang Lab, Hangzhou 311121, China. \textit{Sheng Gao} is with the School of Information, Central University of Finance and Economics, Beijing 100081, China. \textit{Wei Tong} is with the School of Information Science and Engineering, Zhejiang Sci-Tech University, Hangzhou 310018, China. \textit{Di Lu} and \textit{Yulong Shen} are with the School of Computer Science and Technology, Xidian University, and also with the Shaanxi Key Laboratory of Network and System Security, Xi’an 710071, China. \textit{Xiaojiang Du} is with the School of Engineering and Science, Stevens Institute of Technology, Hoboken 07030, USA.
		\IEEEcompsocthanksitem This work is supported in part by the National Key R\&D Program of China (No. 2023YFB3107500), the National Natural Science Foundation of China (No. 62220106004, 62232013), the Technology Innovation Leading Program of Shaanxi (No. 2022KXJ-093, 2023KXJ-033), the Innovation Capability Support Program of Shaanxi (No. 2023-CX-TD-02), and the Fundamental Research Funds for the Central Universities (No. ZDRC2202).
	}
}

\maketitle

\begin{abstract}
Sidechain techniques improve blockchain scalability and interoperability, providing decentralized exchange and cross-chain collaboration solutions for Internet of Things (IoT) data across various domains. However, current state-of-the-art (SOTA) schemes for IoT multi-domain data exchange are constrained by the need for synchronous networks, hindering efficient cross-chain interactions in discontinuous networks and leading to suboptimal data exchange. In this paper, we propose AsyncSC, a novel asynchronous sidechain construction. It employs a committee to provide Cross-Blockchain as a Service (C-BaaS) for data exchange in multi-domain IoT. To fulfill the need for asynchronous and efficient data exchange, we combine the ideas of aggregate signatures and verifiable delay functions to devise a novel cryptographic primitive called delayed aggregate signature (DAS), which constructs asynchronous cross-chain proofs (ACPs) that ensure the security of cross-chain interactions. To ensure the consistency of asynchronous transactions, we propose a multilevel buffered transaction pool that guarantees the transaction sequencing. We analyze and prove the security of AsyncSC, simulate an asynchronous communication environment, and conduct a comprehensive evaluation. The results show that AsyncSC outperforms SOTA schemes, improving throughput by an average of 1.21 to 3.96 times, reducing transaction latency by 59.76\% to 83.61\%, and maintaining comparable resource overhead.
\end{abstract}

\section{Introduction} \label{intro}
The proliferation of the Internet of Things (IoT) has led to ubiquitous data interactions between smart terminals and sensors \cite{he2022collaborative}. For instance, in mobile crowd-sensing, individuals with smartphones and wearables can share data to achieve comprehensive environmental perception \cite{liu2018survey}. According to Cisco, the number of IoT devices worldwide is expected to reach 500 billion by 2030 \cite{Cisco2023}. Thus, equipping IoT with a secure, efficient, and scalable multi-domain data exchange infrastructure is crucial. However, most current IoT data storage and sharing infrastructures are centralized, requiring IoT organizations across domains to trust a third-party center for data management, resulting in scalability issues and single point of failure risks \cite{liu2023survey}.

The increasing interest in blockchain technology has driven researchers to explore decentralized solutions to enhance IoT capabilities \cite{tong2022blockchain, tong2023ti}. However, due to resource limitations of IoT devices and potential privacy concerns, it is difficult to directly apply public blockchains to IoT scenarios. Permissioned blockchains offer data immutability and multi-party maintenance benefits of public blockchains while providing higher performance, flexibility, and privacy protection \cite{dib2018consortium}. Consequently, some solutions explore using permissioned blockchains to meet the dynamic high-performance requirements of IoT scenarios \cite{feng2021consortium}, but still require cross-chain interaction capabilities between permissioned blockchains \cite{zhang2024time}.

Sidechain provides a secure and efficient cross-chain solution with two modes: (i) \textit{Parent-child mainchain-sidechain}. The traditional sidechains are pegged to the mainchain and achieve scaling by offloading traffic from the mainchain. Sidechain nodes monitor the mainchain states, and the mainchain verifies the cross-chain proofs from the sidechain maintainers for cross-chain interactions \cite{Back2014EnablingBI}. (ii) \textit{Parallel sidechain}. Sidechain and mainchain are independent parallel blockchains, which can act as each other's sidechain and verify mutual cross-chain proofs for two-way cross-chain interactions \cite{kiayias2019proof, gavzi2019proof}. Considering context, the latter is more suitable for data exchange among organizations in different IoT domains. Each organization runs an independent permissioned blockchain to manage IoT data and cross-chain communication without third-party dependency through cross-chain proofs.

\begin{figure}[t]
	\vspace{-0.35cm}
	\centering
	\includegraphics[width=2.8in]{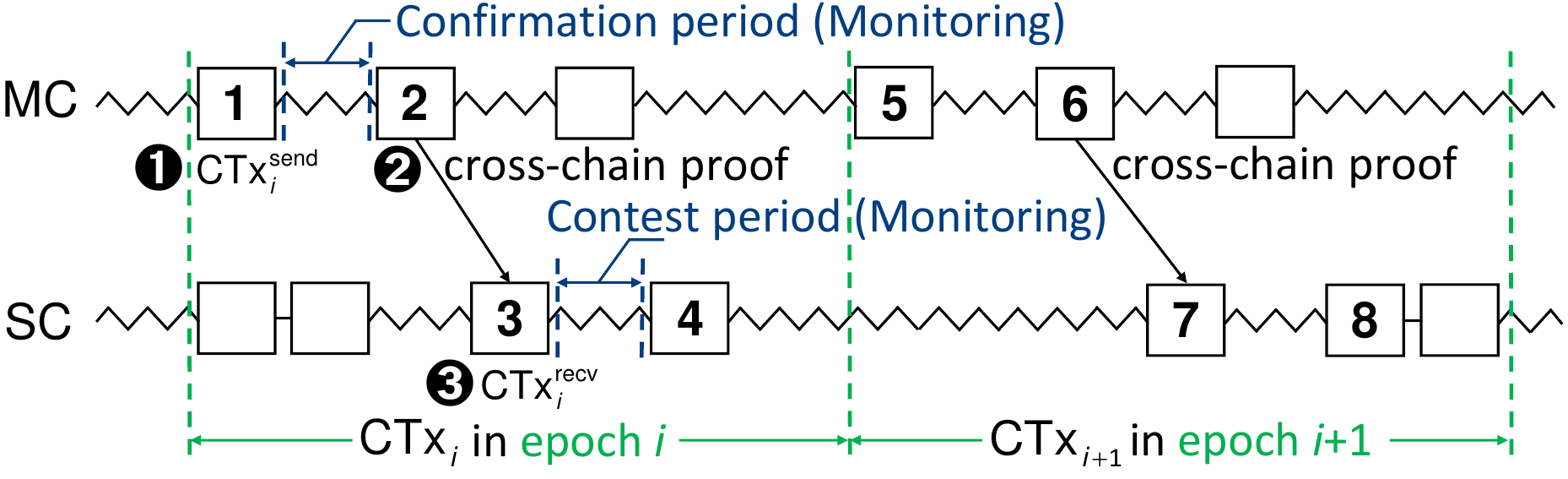}
	\vspace{-0.2cm}
	\captionsetup{font=footnotesize, labelsep=period}
	\caption{Illustration for current sidechain-based cross-chain interactions. Wavy lines indicate omitted blocks. Blocks: \textbf{1.} $\textsf{CTx}_{i}^{\textsf{send}}$ is initiated; \textbf{2.} The cross-chain proof of $\textsf{CTx}_{i}^{\textsf{send}}$ is sent to \textsf{SC}; \textbf{3.} A $\textsf{CTx}_{i}^{\textsf{recv}}$ is created; \textbf{4.} $\textsf{CTx}_{i}^{\textsf{recv}}$ is stabilized on \textsf{SC}, and then $\textsf{CTx}_i$ is completed; \textbf{5 to 8.} The $\textsf{CTx}_{i+1}$ is executed at the next epoch.}
	\label{1-intro}
	\vspace{-0.75cm}
\end{figure}

\textbf{Motivation.} (i) Currently, sidechain applications focus on cryptocurrency exchange on public blockchains, prioritizing security over performance in synchronous network settings \cite{Back2014EnablingBI, kiayias2019proof, gavzi2019proof, yin2021sidechains}, as shown in Fig. \ref{1-intro}. During cross-chain processes, the mainchain and sidechain should continuously monitor the block status, which increases the communication burden (see Section \ref{SBCC} for details). This does not meet the need for efficient cross-domain data exchange in discontinuous networks for resource-constrained devices. (ii) In permissioned blockchains, most schemes use trusted relays for cross-chain interactions \cite{kwon2019cosmos, FISCO, shao2020bitxhub}, but the introduction of third-party relays adds complexity and privacy risk to the system. (iii) Recent advances utilize threshold signatures and zero-knowledge proof techniques to generate succinct cross-chain proofs \cite{yin2023sidechains, garoffolo2020zendoo, xie2022zkbridge}. However, to ensure the consistency of cross-chain interactions, strict synchronization between blockchains is still required, and the continuous monitoring of blocks results in significant on-chain overhead.

In addition, asynchronous cross-chain interaction differs from the synchronous method in that asynchronous cross-chain protocols typically achieve higher performance. This is because it departs from the strict sequential nature of transaction execution in synchronous mode and provides asynchronous concurrency. However, protocol design under asynchronous conditions poses greater challenges and must address how to ensure data consistency and security across different blockchains in the absence of a global clock.

\textbf{Challenge.} Ensuring the persistence and liveness of blockchains for cross-chain interactions without synchronous network settings is a significant challenge for asynchronous cross-chain communication. In other words, it is necessary to ensure that valid asynchronous cross-chain transactions can be smoothly executed, eventually recorded in blocks, and confirmed as stable by a sufficient number of subsequent blocks. This poses the following technical challenges: (i) How to construct a cross-chain interaction mode in asynchronous environments to enable efficient cross-domain data exchange for nodes with resource constraints and unstable network connections. (ii) How to design lightweight and secure asynchronous cross-chain proofs to reduce the computational burden. (iii) How to ensure the sequential consistency of cross-chain transactions in asynchronous mode.

This paper proposes an \textbf{async}hronous \textbf{s}ide\textbf{c}hain alternative construction called AsyncSC, which focuses on a parallel sidechain model without the need for synchronous network settings between blockchains. It utilizes an internal trusted committee formed by blockchain maintainers instead of relying on external relays. The main \textbf{contributions} are as follows:

$\bullet$ \textbf{Asynchronous cross-chain mode.} We present the first asynchronous cross-chain mode suitable for proof-of-stake (PoS) public blockchains and permissioned blockchains, which can construct an honest majority committee. The security of cross-chain interactions is ensured by lightweight asynchronous cryptographic proofs committed by the committee. The committee cyclically provides Cross-Blockchain as a Service (C-BaaS) to enable efficient data interactions across multiple IoT domains in network-unstable environments.

$\bullet$ \textbf{Innovative cryptographic proofs.} We introduce a novel cryptographic primitive called delayed aggregate signature (DAS) to generate asynchronous cross-chain proofs (ACPs) underpinning IoT multi-domain data exchanges. DAS combines the ideas of aggregate signatures and verifiable delay functions (VDFs) to generate an aggregate signature for multiple transactions after a controlled delay and supports public fast verification. As the foundation of C-BaaS, DAS is crucial for secure asynchronous cross-chain interactions, enabling the continuous generation of ACPs in asynchronous mode.

$\bullet$ \textbf{Restricted-readable consistency mechanism.} We design a transaction sequence guarantee mechanism to ensure the consistency of asynchronous cross-chain transactions. It maintains a multilevel buffer pool to coordinate the mainchain and sidechain for transaction ordering and confirmation, thus avoiding conflicts. The mechanism ensures the internal confidentiality of different IoT data domains, making the transaction sequence readable only to the authenticated leader.

$\bullet$ \textbf{Comprehensive evaluation.} We implement a prototype based on Hyperledger Fabric \cite{androulaki2018hyperledger}, and ChainMaker \cite{chainmaker} and the evaluation results demonstrate that AsyncSC outperforms SOTA solutions. It improves transaction throughput by an average of 1.21 $\sim$ 3.96 $\times$, reduces latency by 59.76\% to 83.61\%, enhances success ratio by a relative 46.03\% to 135.9\%, and maintains comparable overhead.


\section{Background and Preliminaries} \label{pre}
\subsection{Current Sidechain-Based Cross-Chain Interactions} \label{SBCC}
To achieve better security, existing sidechain-based cross-chain interaction schemes \cite{Back2014EnablingBI, gavzi2019proof, kiayias2019proof, yin2021sidechains, yin2023sidechains} require the mainchain (\textsf{MC}) and sidechain (\textsf{SC}) to perform cross-chain transactions (\textsf{CTx}s) in (semi-)synchronous network settings \cite{david2018ouroboros}. The term ``synchronous setting'' assumes that the maintainers of \textsf{MC} and \textsf{SC} have roughly synchronized clocks to denote the current time slot. Message exchange among transaction participants incurs a maximum delay of $\Delta$ time slots. A scenario with $\Delta=0$ is referred to as a synchronous model, while a scenario with $\Delta>0$ is termed semi-synchronous.

We now proceed to illustrate the example in Fig. \ref{1-intro} under synchronous network mode. Assuming a $\textsf{CTx}_i$ for cryptocurrency exchange from \textsf{MC} to \textsf{SC}, it typically involves three steps: \ding{202} A $\textsf{CTx}_{i}^{\textsf{send}}$ locks coins on \textsf{MC}. \ding{203} \textsf{MC} generates a cross-chain proof committing to the execution of $\textsf{CTx}_{i}^{\textsf{send}}$ after monitoring the stabilization of the block recording $\textsf{CTx}_{i}^{\textsf{send}}$ on the ledger $\textsf{L}_{\textsf{MC}}$ through a sufficient number of subsequent block confirmations\footnote{In the Bitcoin blockchain, it typically takes six subsequent blocks to confirm the stability of a transaction within a block.} (also called the confirmation period \cite{Back2014EnablingBI}). \ding{204} \textsf{SC} receives the cross-chain proof under the synchronous network, verifies it and executes the $\textsf{CTx}_{i}^{\textsf{recv}}$ to create the corresponding coins. Similarly, $\textsf{CTx}_i$ is not successfully executed until the block containing $\textsf{CTx}_{i}^{\textsf{recv}}$ is stabilized on the $\textsf{L}_{\textsf{SC}}$ (\textit{i.e.}, after the contest period). The subsequent $\textsf{CTx}_{i+1}$ is not executed until the above steps of $\textsf{CTx}_i$ are completed.

The primary performance bottleneck in the existing cross-chain interaction process arises from the continuous monitoring required by \textsf{MC} and \textsf{SC} to determine the stability of $\textsf{CTx}_{i}^{\textsf{send}}$ and $\textsf{CTx}_{i}^{\textsf{recv}}$, respectively, before initiating a new $\textsf{CTx}_{i +1}$. The \textsf{CTx}s are sequentially executed under the synchronous network. Furthermore, the monitoring processes of \textsf{MC} and \textsf{SC} necessitate constant state synchronization, thus increasing the communication load. In addition, we compare the features of AsyncSC with existing sidechains in Table \ref{features}.

\begin{table}[t]
	\vspace{-0.15cm}
	\captionsetup{font=footnotesize}
	\caption{\textsc{Feature Comparison with Existing Sidechains.}}
	\vspace{-0.2cm}
	\hspace{-0.32cm}
	\resizebox{1.02\columnwidth}{!}{
		\begin{threeparttable}
			\begin{tabular}{lccccc}
				\toprule
				\diagbox{Schemes}{Features} & \makecell{Sidechain\\ mode} & \makecell{Cross-chain\\ proof$^1$} & \makecell{Asynchronous\\ network/mode} & \makecell{Batch\\ commitment$^2$}  & \makecell{IoT data\\ exchange$^3$} \\
				\midrule
				Pegged sidechains \cite{Back2014EnablingBI} & Parent-child & SPV proof  & \ding{55} & \ding{55} & \ding{55}  \\
				PoW sidechains \cite{kiayias2019proof} & Parallel & NIPoPoWs & \ding{55} & \ding{55} & \ding{55}  \\
				PoS sidechains \cite{gavzi2019proof} & Parallel & ATMS  & \ding{55} & \ding{55} & \ding{55}  \\
				Ref. \cite{yin2021sidechains}, Ge-Co \cite{yin2023sidechains} & Parent-child  & TSS  & \ding{55} & \ding{55} & \ding{55} \\
				AsyncSC (This work) & Parallel & ACP & \ding{51} & \ding{51} & \ding{51} \\
				\bottomrule
			\end{tabular}
			\begin{tablenotes}
				\item[$1$] SPV stands for simplified payment verification; NIPoPoWs stands for non-interactive proofs of proof-of-work; ATMS stands for ad-hoc threshold multi-signatures; TSS stands for threshold signature schemes.
				\item[$2$] Each cross-chain proof in existing sidechains supports committing only a single transaction at a time.
				\item[$3$] Existing sidechains only support cross-chain exchanges involving cryptocurrencies.
			\end{tablenotes}
		\end{threeparttable}
	}
	\label{features}
	\vspace{-0.6cm}
\end{table}

\subsection{Cross-Chain Interactions in Asynchronous Networks} \label{syn}
To meet the needs of resource-constrained multi-domain IoT devices for cross-chain data exchange in discontinuous networks, IoT devices must be able to store data locally while offline and upload it to the blockchain once online. This aspect is not the primary focus of this paper, as existing implementations are already mature \cite{muzammal2019renovating, lu2021blockchain, zhou2022fair}. More importantly, cross-chain interactions between blockchains need to support asynchronous transmissions. 

To distinguish our proposal from existing work, we elucidate cross-chain interactions in asynchronous networks. As shown in Fig. \ref{2-async}, under synchronous networks, cross-chain proofs are assumed to be transmitted from \textsf{MC} to \textsf{SC} within a finite time, and the timing required for each \textsf{CTx} does not overlap. However, in this paper's context, message transmission between \textsf{MC} and \textsf{SC} may be interrupted, causing \textsf{CTx}s in synchronous mode to be blocked. To ensure proper business operations, we propose allowing the timings of \textsf{CTx}s to partially overlap in asynchronous mode.

Specifically, after $\textsf{CTx}_{i}^{\textsf{send}}$ has been executed (\textit{i.e.}, the transaction is signed, consensused, and recorded on-chain) on \textsf{MC}, the generation of cross-chain proof can begin. The cross-chain proof of \textsf{MC} is delayed to be output after $\textsf{CTx}_{i}^{\textsf{send}}$ is stabilized, and this delayed output can be verified by \textsf{SC}. In other words, this delayed cross-chain proof commits the stability of $\textsf{CTx}_{i}^{\textsf{send}}$ in \textsf{MC}, so \textsf{MC} does not need to continuously monitor the state of $\textsf{CTx}_{i}^{\textsf{send}}$. This reduces communication overhead and allows \textsf{MC} to concurrently execute the next $\textsf{CTx}_{i+1}$, thereby improving transaction efficiency.

In an asynchronous network, \textsf{MC}'s cross-chain proofs may be transmitted to \textsf{SC} after unknown delays due to unstable communications. Thus, \textsf{SC} needs to ensure that the order of asynchronous transactions is consistent with \textsf{MC}. We will explain this issue further in Section \ref{Problem}.

\subsection{Cryptographic Primitives} \label{CP}
\textit{\textbf{Aggregate Signature.}} The aggregate signature is a cryptographic digital signature technique that supports aggregation \cite{boneh2001short}. Given $n$ signatures from $n$ different users on $n$ different messages, it can aggregate these signatures into a single, short signature. This aggregate signature convinces the verifier that $n$ users indeed signed the $n$ original messages (\textit{i.e.}, user $i$ signed message $M_i$ for $i=1,...,n$). This reduces storage, transmission, and verification costs. In blockchain, aggregate signatures enable batch verification of transactions \cite{zhao2019practical}.

\textbf{\textit{Verifiable Delay Function.}} The verifiable delay function (VDF) is a cryptographic function \cite{boneh2018verifiable}. It allows a prover to use the public parameters $pp$ to generate a deterministic, unique value $y$ corresponding to the input $x$ and a proof $\pi$ by running an evaluation algorithm $(y, \pi) \leftarrow \texttt{Eval}(pp, x)$ with input arbitrary data $x$. The execution of \texttt{Eval} requires a specified number of consecutive steps, causing a delay predetermined by a delay parameter $t$. Once \texttt{Eval} is computed, anyone can use the public parameters $pp$ to quickly and publicly verify the output of \texttt{Eval} via $\{accept, reject\} \leftarrow \texttt{Verify}(pp, x, y, \pi)$.

\begin{figure}[t]
	\vspace{-0.3cm}
	\centering
	\includegraphics[width=3.3in]{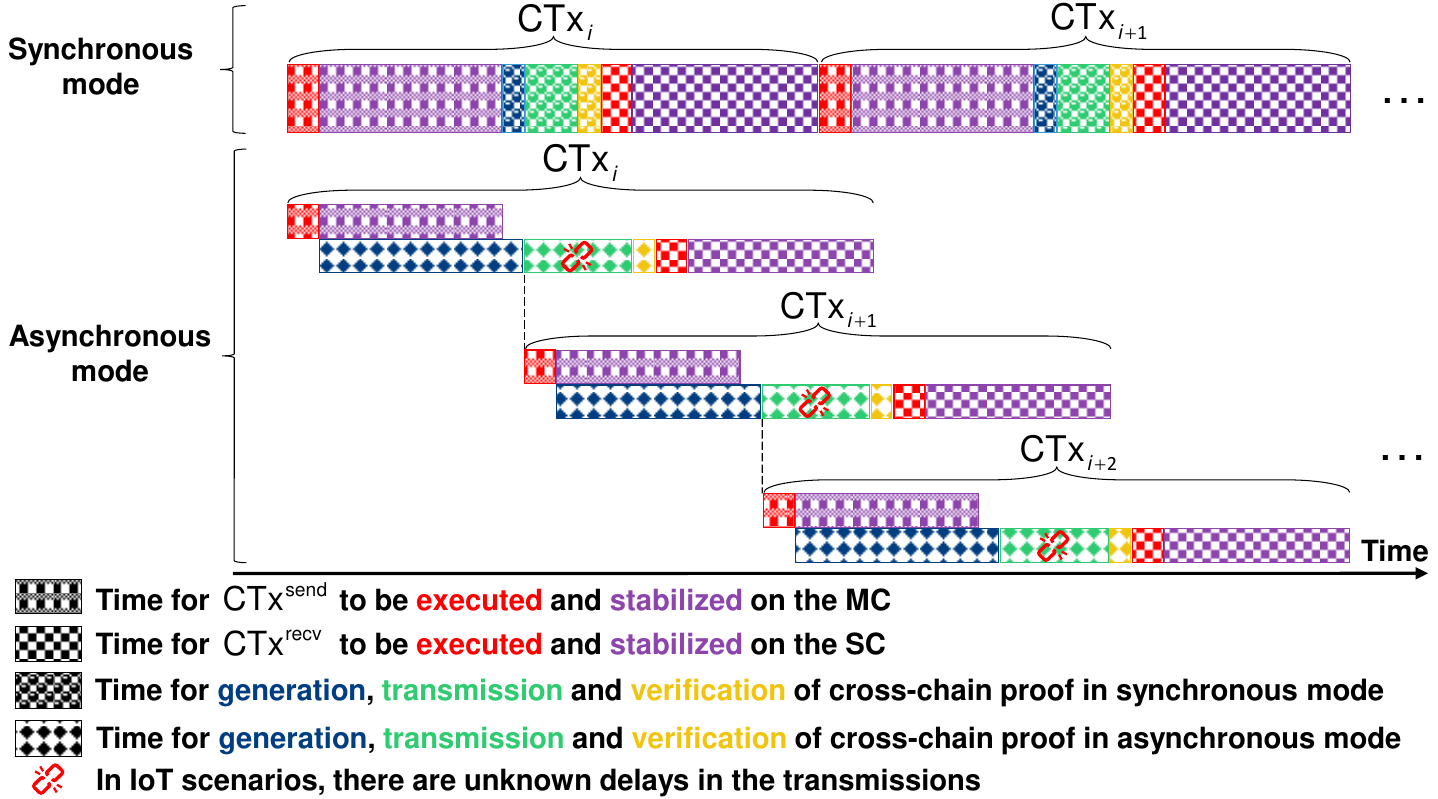}
	\vspace{-0.1cm}
	\captionsetup{font=footnotesize, labelsep=period}
	\caption{Illustration for asynchronous cross-chain interactions.}
	\label{2-async}
	\vspace{-0.65cm}
\end{figure}

\section{AsyncSC: System Overview} \label{solu}
This section introduces the AsyncSC system model, discusses the transaction consistency problem in asynchronous cross-chain mode, and presents the assumptions. 

\subsection{System Model} \label{System}
This paper focuses on multi-domain data exchange scenarios in IoT, as illustrated in Fig. \ref{3-system}. IoT sensor data is uploaded to permissioned blockchains managed by different organizations via transport nodes like IoT gateways and base stations for decentralized management. Independent blockchains in a parallel chain relationship require cross-chain interaction for data exchange among different organizations. Due to resource constraints of IoT devices and unstable network connections, cross-chain communication is achieved by generating and verifying asynchronous cross-chain proofs (ACPs).

Next, we describe the entities involved in cross-chain interactions and the workflow depicted in Fig. \ref{3-entity}.

\textbf{Entities.} AsyncSC includes the following three entities.

\textbf{\textit{Organization.}} It consists of permissioned blockchain nodes $S^{Org}$ that maintain data from different IoT domains. These nodes may be elected to a committee to provide Cross-Blockchain as a Service (C-BaaS), which records data from IoT devices on-chain and facilitates cross-chain interactions.

\textbf{\textit{Committee.}} It is a set of nodes $S^{C}=\{C_{i}\}|_{i=1}^{m} \in S^{Org}$ that are fairly elected from the organization by periodically running an election protocol\footnote{Existing committee election schemes are well established, as noted in the Ref. \cite{yin2023sidechains, huang2023scheduling, zhai2024secret, boehmer2024approval}, and their optimization is part of our future work.} via smart contracts. These nodes provide C-BaaS, verify and sign transactions $\{\textsf{CTx}_i\}|_{i=1}^{n}$ packed per epoch, and generate ACPs by the leader. Additionally, the committee maintains a buffer pool to ensure the consistency of asynchronous \textsf{CTx}s.

\textbf{\textit{Leader.}} It is a node $C^L$ with the highest election value in the committee, \textit{i.e.}, $C^L \in S^{C}$ and $C^L \leftarrow \texttt{max.ElecVal}(\{C_{i}\}|_{i=1}^{m})$. It is responsible for performing delayed signature aggregation of the \textsf{CTx} set $\textsf{CTxSet}:=\{\textsf{CTx}_i\}|_{i=1}^{n}$ for each epoch of the committee and generating ACPs. Additionally, it verifies the received ACPs and finalizes the corresponding \textsf{CTx}s.

\begin{figure}[t]
	\vspace{-0.4cm}
	\centering
	\includegraphics[width=3.2in]{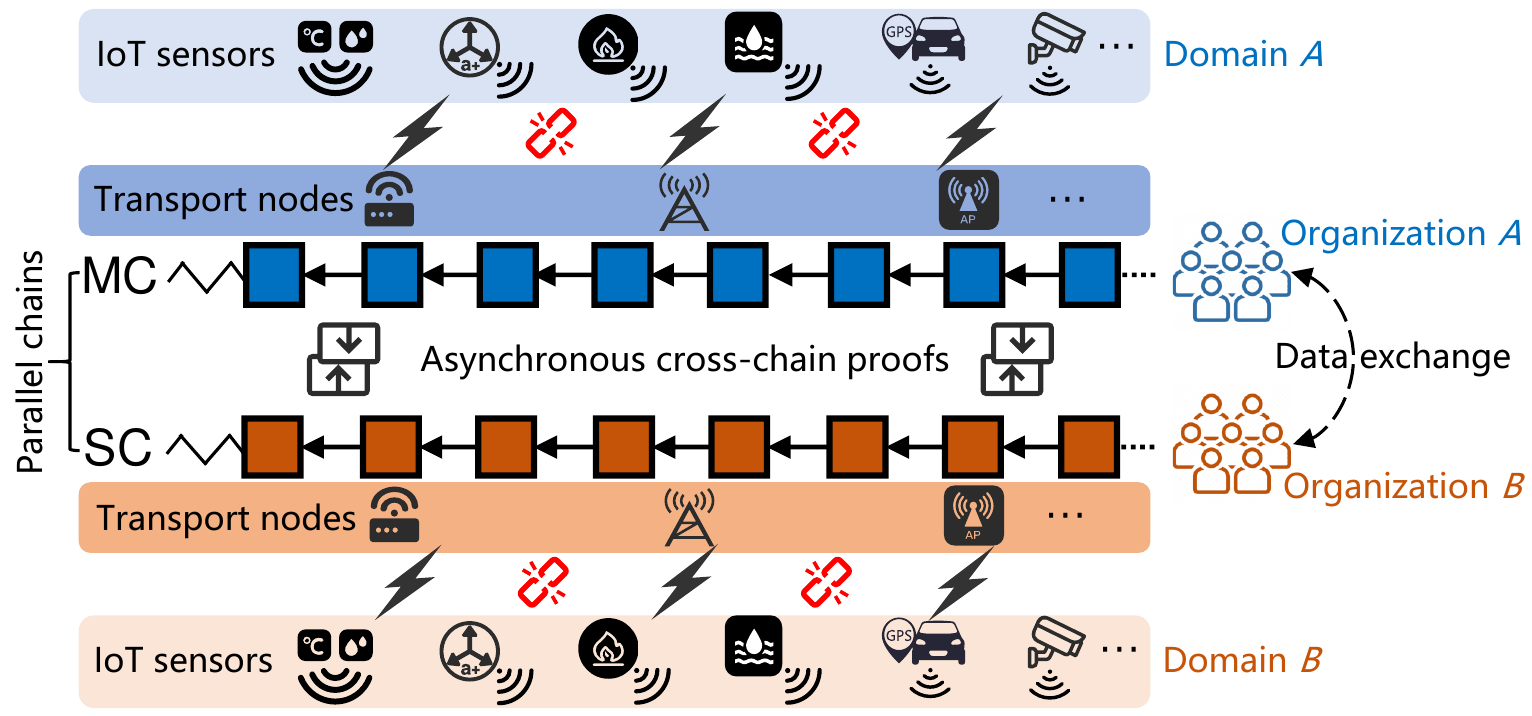}
	\vspace{-0.1cm}
	\captionsetup{font=footnotesize, labelsep=period}
	\caption{System model of AsyncSC.}
	\label{3-system}
	\vspace{-0.71cm}
\end{figure}

\textbf{Workflow.} The C-BaaS consists of the following steps:

\ding{202} \textbf{\textit{\textsf{CTx} initiation.}} Suppose in an epoch, the organization to which \textsf{MC} belongs initiates a batch of \textsf{CTx}s, \textit{i.e.}, $\{\textsf{CTx}_i\}|_{i=1}^{n}$. The nodes of \textsf{MC} start to record $\{\textsf{CTx}_{i}^{\textsf{send}}\}|_{i=1}^{n}$ on-chain. At the same time, the committee of \textsf{MC} receives a \textsf{CTxSet} message formed by packing these transactions.

\ding{203} \textbf{\textit{Committee signing.}} The committee of \textsf{MC} starts to provide C-BaaS. First, the committee puts the transactions of \textsf{CTxSet} into the buffer pool. Then, each committee member $C_{i} \in S^{C}$ signs the \textsf{CTxSet} using his/her private key via any EUF-CMA-secure digital signature scheme, \textit{i.e.}, $\sigma_{i} \leftarrow \texttt{Sig}(\textsf{CTxSet}, sk_i)$ for $i=1,... ,m$. Finally, the committee sends a sequence of $\{(pk_i, \sigma_i)\}|_{i=1}^{m}$ to the leader of \textsf{MC}.

\ding{204} \textit{\textbf{ACP generation.}} The leader of \textsf{MC} executes a delayed signature aggregation on the committee's signatures, \textit{i.e.}, $(\sigma_{\textsf{DAS}}, \pi_{\textsf{DAS}}, \cdot) \leftarrow \texttt{DASig}(\cdot)$. Here, the $\cdot$ notation indicates that some inputs or outputs of the algorithm are omitted (details in Section \ref{Definition-DAS}). The outputs of \texttt{DASig} along with the public parameter $pp$ form an ACP representing the validity commitment of \textsf{CTxSet} on \textsf{MC}.

\ding{205} \textit{\textbf{ACP verification.}} Upon receiving the ACP from \textsf{MC}, the leader of \textsf{SC} verifies the validity of the ACP, \textit{i.e.}, $1/0 \leftarrow \texttt{DASVer}(\cdot)$. Upon successful verification, the committee of \textsf{SC} queues the transactions within \textsf{CTxSet} into the buffer pool awaiting on-chain recording. To ensure the consistency of asynchronous \textsf{CTx}s, the \textsf{SC} committee conducts a transaction sequentiality check utilizing the buffer pool maintained by \textsf{MC} to prevent conflicts.

\ding{206} \textbf{\textit{\textsf{CTx} completion.}} The organization to which \textsf{SC} belongs records the $\{\textsf{CTx}_{i}^{\textsf{recv}}\}|_{i=1}^{n}$ in \textsf{CTxSet} on-chain. Once these transactions are stabilized within the ledger $\textsf{L}_{\textsf{SC}}$, the corresponding \textsf{CTx}s for this epoch are finalized.

\subsection{Problem Statement} \label{Problem}
Due to unstable network connections across various IoT domains, asynchronous communication occurs between blockchains. For instance, during epoch $i$, the \textsf{SC} may not promptly receive the ACP$_{i}$ generated by \textsf{MC} for that epoch, leading to potential delays in the workflows \ding{205} and \ding{206} as depicted in Fig. \ref{3-entity}. Subsequent transactions on \textsf{SC} may conflict if they depend on transactions committed by ACP$_{i}$.

Specifically, suppose at epoch $i$, \textsf{MC} records a temperature sensor data as 10 $\tccentigrade$. At epoch $i+1$, the ambient temperature rises by 5 $\tccentigrade$, updating the data to 15 $\tccentigrade$. If \textsf{SC} does not receive the temperature data from epoch $i$ in time due to network delay or other reasons, it may record the data at epoch $i+1$ directly as 15 $\tccentigrade$ and ignore the 10 $\tccentigrade$ that should have been recorded. This results in the data being recorded out of order, thereby affecting the consistency and accuracy of the system.

To avoid this issue, we address the consistency problem of asynchronous \textsf{CTx}s by maintaining transaction buffer pools via committees and devising an asynchronous transaction sequence guarantee mechanism (see Section \ref{Buffer}).

\begin{figure}[t]
	\vspace{-0.43cm}
	\centering
	\includegraphics[width=3.1in]{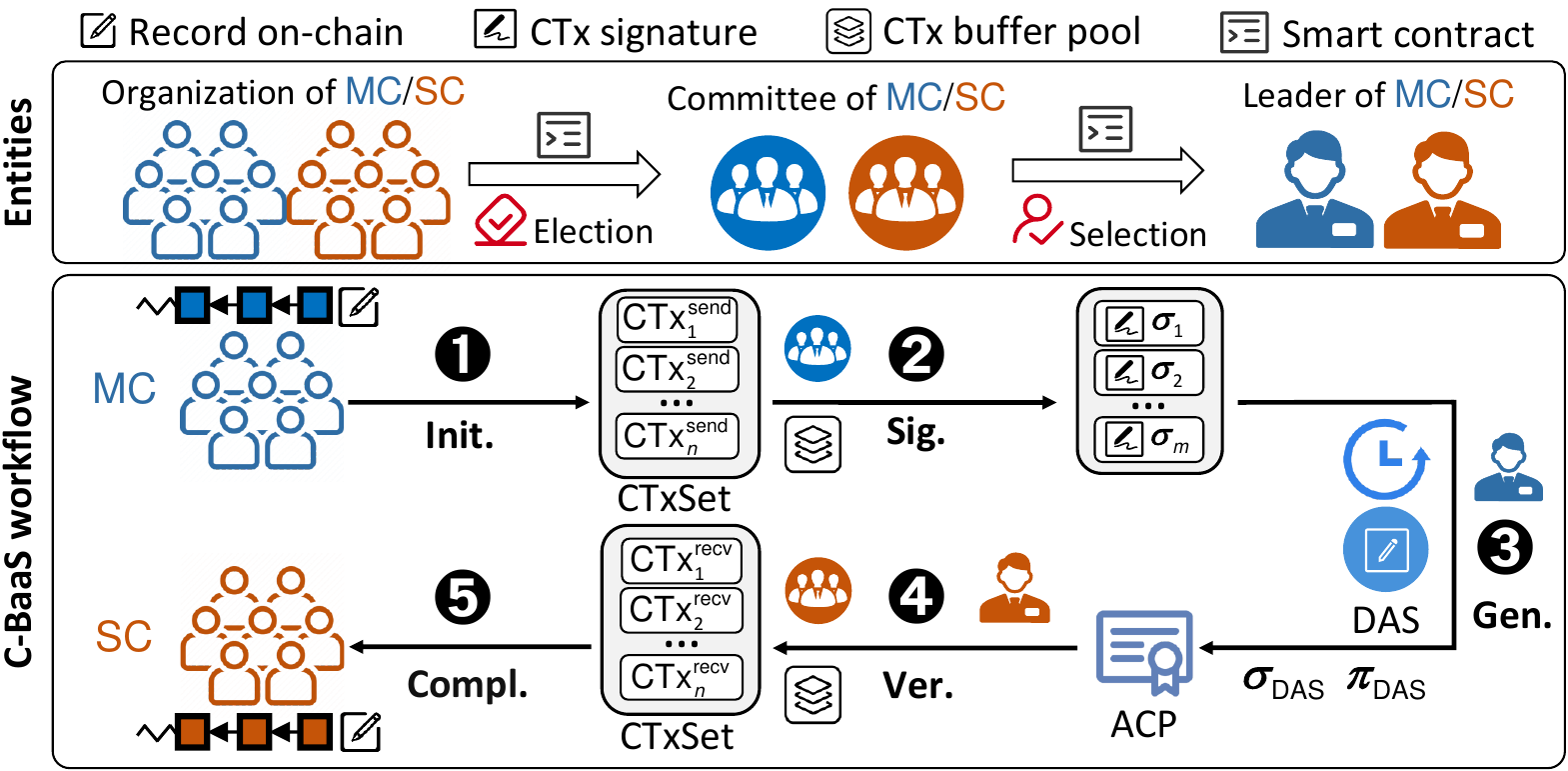}
	\vspace{-0.1cm}
	\captionsetup{font=footnotesize, labelsep=period}
	\caption{Cross-chain interaction entities and workflow in AsyncSC.}
	\label{3-entity}
	\vspace{-0.65cm}
\end{figure}

\subsection{Assumptions} \label{Assumptions}
\textbf{\textit{Assumptions.}} The upper bound $\Delta$ on message transmission delay due to asynchronous communication may be unknown. In practice, we assume a negligible probability of $\Delta=\infty$ between \textsf{MC} and \textsf{SC}. To ensure safety and liveness, \textsf{MC} retransmits the \textsf{CTx}s to \textsf{SC} if they exceed an asynchronous response time threshold $\Delta_{async}$. The \textsf{CTx}s of each epoch eventually complete their execution through the C-BaaS provided by the committee. If the verification of \textsf{MC}'s cross-chain proof passes, then \textsf{SC} will accurately complete the corresponding cross-domain data on-chain and stabilize it. To minimize the overhead caused by the committee elections, we assume a long-term election period, with each period containing multiple epochs for packing transactions.

\textbf{\textit{Threat model.}} We consider that an adversary may attempt to violate the protocol by forging signatures or delaying/denying message transmission. However, the adversary's behavior is rational, \textit{i.e.}, it weighs the costs and benefits of the attack. As in existing blockchain systems, we assume that the nodes maintaining the permissioned blockchains are majority honest \cite{yin2021sidechains, castro1999practical}. The adversary does not possess enough computational power to break the participants' signatures or compromise blockchains and smart contracts.
\begin{table*}[t]
	\vspace{-0.4cm}
	\centering
	\renewcommand{\arraystretch}{1.2} 
	\resizebox{1.75\columnwidth}{!}{
		\begin{tabular}{|r l l|}
			\hline
			& \multicolumn{1}{c}{\multirow{1}{*}{\centering \begin{tabular}{c}
						$S^{Org}_{A}(S^{C}_{A}, C^L_A, (PK_A, SK_A), (ek_A, vk_A), \{d_i^A\}|_{i=1}^{n})$
			\end{tabular}}}
			& \multicolumn{1}{c|}{\multirow{1}{*}{\centering \begin{tabular}{c}
						$S^{Org}_{B}(S^{C}_{B}, C^L_B, (PK_B, SK_B), (ek_B, vk_B), \{d_i^B\}|_{i=1}^{n'})$
			\end{tabular}}} \\
			\hline
			1: & \multicolumn{2}{c|}{\begin{tikzpicture}
					\draw[very thin,->] (0,0) -- (8.5,0) node[midway, above] {Cross-domain transfer of multiple data $\{d_i^A\}|_{i=1}^{n}$ from $S^{Org}_{A}$ to $S^{Org}_{B}$};
			\end{tikzpicture}} \\ 
			2: & Include each data $d_i^A$ within an epoch into each $\textsf{CTx}_i$ of $\textsf{CTxSet}_A$ & \\
			3: & $\sigma_{i}^A \leftarrow \texttt{Sig}(\textsf{CTxSet}_A, sk_i^A)$ for $i=1,... ,m$ & \\
			4: & $apk_A \leftarrow \texttt{KeyAgg}(PK_A)$ & \\
			5: & $(\sigma_{\textsf{DAS}}, \pi_{\textsf{DAS}}, \textsf{CTxSet}_A') \leftarrow \texttt{DASig}(\textsf{CTxSet}_A, PK_A, \{(pk_i^A, \sigma_i^A)\}|_{i=1}^{m}, ek_A)$ & \\
			6: & Abort if $\texttt{AKCheck}(PK_A, apk_A) \neq 1$ & \\
			7: & \multicolumn{2}{c|}{\begin{tikzpicture}
					\draw[very thin,->] (0,0) -- (7.2,0) node[midway, above] {$ACP_A:= (vk_A, \textsf{CTxSet}_A, apk_A, \sigma_{\textsf{DAS}}, \pi_{\textsf{DAS}}, \textsf{CTxSet}_A')$};
			\end{tikzpicture}} \\ 
			8: & & Abort if $\texttt{DASVer}(\textsf{CTxSet}_A, apk_A, \sigma_{\textsf{DAS}}, \pi_{\textsf{DAS}}, \textsf{CTxSet}_A', vk_A) \neq 1$ \\
			9: & & $\{\textsf{CTx}_i\}|_{i=1}^{n} \leftarrow \textsf{CTxSet}_A$ \\
			10: & & \textsf{CTx} sequential check and record each cross-domain data $d_i^A$ on-chain \\
			11: & \multicolumn{2}{c|}{\begin{tikzpicture}
					\draw[very thin,<-] (0,0) -- (10,0) node[midway, above] {Cross-domain transfer of multiple data $\{d_i^B\}|_{i=1}^{n'}$ from $S^{Org}_{B}$ to $S^{Org}_{A}$ (vice versa)};
			\end{tikzpicture}} \\
			\hline
		\end{tabular}
	}
	\vspace{-0.1cm}
	\captionsetup{type=figure, font=footnotesize, labelsep=period, justification=justified, singlelinecheck=false}
	\caption{Procedure of applying DAS to AsyncSC. In the figure, organization \textit{A} cross-domain transfers data $\{d_i\}|_{i=1}^{n}$ within an epoch to organization \textit{B} and vice versa. $(PK_A, SK_A)$ represents the public-private key pairs of the committee nodes $S^{C}_{A} \in S^{Org}_{A}$ in organization \textit{A}. $(ek_A, vk_A)$ are the publicly accessible evaluation and verification keys used in DAS by organization \textit{A}.}
	\label{DAS-procedure}
	\vspace{-0.6cm}
\end{table*}

\section{Delayed Aggregate Signature} \label{DAS}
In this section, we introduce a novel cryptographic primitive known as the delayed aggregate signature (DAS), which serves as the fundamental building block of ACP and the core of C-BaaS. DAS integrates properties from aggregate signatures and VDFs. It provides a way to: (i) Aggregate multiple individual signatures $\{\sigma_{i}\}|_{i=1}^{m}$ into a single aggregate signature $\sigma_{\textsf{DAS}}$ after a controllable delay $\delta$ and produce a proof of delay $\pi_{\textsf{DAS}}$. (ii) Enable the verifier to quickly and publicly verify the $\sigma_{\textsf{DAS}}$ and $\pi_{\textsf{DAS}}$, thereby effectively verifying the authenticity of $\{\sigma_{i}\}|_{i=1}^{m}$ and the delay $\delta$.

\subsection{Definition of Delayed Aggregate Signature} \label{Definition-DAS}
\begin{definition}
	{\rm \textbf{(Delayed Aggregate Signature).}} A delayed aggregate signature is a tuple of algorithms $\Pi=$ {\rm$(\texttt{ParGen}, \texttt{KeyGen}, \texttt{Sig}, \texttt{Ver}, \texttt{KeyAgg}, \texttt{AKCheck}, \texttt{DASig},$} {\rm$\texttt{DASVer})$}, whose syntax is described below:
	
	- \underline{$(sp,pp)\leftarrow${\rm$\texttt{ParGen}$}$(1^\lambda, t):$} A parameter generation algorithm that takes as input the security parameter $\lambda$ and the delay parameter $t$, and outputs the global system parameters $sp$ and the public parameters $pp=(ek, vk)$ consisting of an evaluation key $ek$ and a verification key $vk$.
	
	- \underline{$(pk_i, sk_i)\leftarrow${\rm$\texttt{KeyGen}$}$(sp):$} A key generation algorithm that inputs $sp$ and generates a public-private key pair $(pk_i, sk_i)$ for the invoker.
	
	- \underline{$\sigma_i \leftarrow${\rm$\texttt{Sig}$}$(sk_i, M):$} A common signature algorithm that inputs the private key $sk_i$ and a message $M \in \{0, 1\}^*$, and outputs a signature $\sigma_i$.
	
	- \underline{$1/0 \leftarrow${\rm$\texttt{Ver}$}$(M, pk_i, \sigma_i):$} A signature verification algorithm that inputs the message $M$, the public key $pk_i$ and the signature $\sigma_i$, and outputs 1 (accept) if the verification is successful and 0 (reject) otherwise.
	
	- \underline{$apk \leftarrow${\rm$\texttt{KeyAgg}$}$(PK):$} A key aggregation deterministic algorithm that inputs a set of public keys $PK=\{pk_{i}\}|_{i=1}^{m}$ and outputs an aggregate public key $apk$.
	
	- \underline{$1/0 \leftarrow${\rm$\texttt{AKCheck}$}$(PK, apk):$} An aggregate key checking algorithm that inputs a public key set $PK$ and an aggregate public key $apk$, and outputs 1 if the recomputation check is correct and 0 otherwise.
	
	- \underline{$(\sigma_{\textsf{DAS}}, \pi_{\textsf{DAS}}, M') \leftarrow${\rm$\texttt{DASig}$}$(M, PK, \{(pk_i, \sigma_i)\}|_{i=1}^{m}, ek):$} A delayed aggregation signature algorithm that inputs the message $M$, the set of public keys $PK$, a sequence of public key-signature pairs $\{(pk_i, \sigma_i)\}|_{i=1}^{m}$ and the evaluation key $ek$, and outputs a delayed aggregate signature $\sigma_{\textsf{DAS}}$, a proof of delay $\pi_{\textsf{DAS}}$ and a unique output $M'$ corresponding to message $M$.
	
	- \underline{$1/0 \leftarrow${\rm$\texttt{DASVer}$}$(M, apk, \sigma_{\textsf{DAS}}, \pi_{\textsf{DAS}}, M', vk):$} A delayed aggregate signature verification algorithm that inputs the message $M$, the aggregate public key $apk$, the delayed aggregate signature $\sigma_{\textsf{DAS}}$, the delayed proof $\pi_{\textsf{DAS}}$, the unique output $M'$ of the message $M$ and the verification key $vk$, and outputs 1 if the verification is successful and 0 otherwise.
\end{definition}

\subsection{Realization of Delayed Aggregate Signature} \label{Realization-DAS}
The specific realization of DAS is partially inspired by two primitives: aggregate signatures and VDFs, both initially proposed by Boneh \textit{et al.} \cite{boneh2001short, boneh2018verifiable}. Aggregate signatures or multi-signatures have been improved by Ref. \cite{boneh2003aggregate, zhao2019practical, nick2021musig2}, while VDFs have been enhanced by Ref. \cite{wesolowski2020efficient, ephraim2020continuous}. It is noteworthy that the improvements on these primitives are independent of this paper. We extract the signature compression property of aggregate signatures and the delay verifiability property of VDFs to implement compact ACPs.

We refer to constructions in the multi-signature scheme \cite{boldyreva2002efficient} and the VDF scheme for incremental verifiable computation (IVC) \cite{boneh2018verifiable} to realize the DAS scheme. It relies on Gap Diffie-Hellman (GDH) groups and succinct non-interactive arguments of knowledge (SNARK) for its efficient construction. Due to the page limit of this submission, the detailed construction of the DAS, along with its security definition and proof, can be found in the full version published online.

\subsection{Applying DAS to AsyncSC}
To provide the C-BaaS to the organizations managing different IoT domains, the committee $S^{C}$ of each organization $S^{Org}$ runs \texttt{ParGen} and \texttt{KeyGen} to obtain the parameters $(sp,pp)$ and assign the $(pk_i, sk_i)$ pairs.

Fig. \ref{DAS-procedure} illustrates the procedure for the cross-domain transfer of packaged data set $\{d_i^A\}|_{i=1}^{n}$ within an epoch from organization \textit{A} to organization \textit{B}. First, the nodes $S^{Org}_{A}$ of organization \textit{A} initiate \textsf{CTx}s. Each data $d_i^A$ within the epoch is included in its corresponding $\textsf{CTx}_i$, and these \textsf{CTx}s form a $\textsf{CTxSet}_A$ (Line 2). The committee nodes $S^{C}_{A}$ then sign $\textsf{CTxSet}_A$ (Line 3). Subsequently, the leader $C^L_A$ aggregates the public keys $PK_A$ of $S^{C}_{A}$ with $\texttt{KeyAgg}$ (Line 4).

The main procedures then include ACP generation and verification. The fundamental concept underlying asynchronous C-BaaS is that the delay $\delta$ introduced by executing $\texttt{DASig}$ ensures the stabilization of transactions within $\textsf{CTxSet}$ in the ledger of the source chain. Simultaneously, when the committee aggregates the signatures of $\textsf{CTxSet}$, it confirms the transactions as anticipated. Thus, $ACP_A$ commits to the validity of $\textsf{CTxSet}_A$ on organization \textit{A}'s blockchain. Consequently, organization \textit{B} can process corresponding cross-domain transactions on its blockchain whenever $ACP_A$ is received, provided the verification is successful. Thus, DAS enables the continuous generation of ACPs in a discontinuous network and supports rapid public verification of these ACPs.

As described earlier, the leader of organization \textit{A} obtains a validity commitment $(\sigma_{\textsf{DAS}}, \pi_{\textsf{DAS}}, \textsf{CTxSet}_A')$ by running $\texttt{DASig}$ (Line 5). If the committee's aggregate public key $apk_A$ check passes (Line 6), confirming that $ACP_A$ is indeed produced by the honest committee, the leader forwards the DAS commitment and the pre-existing information in an ACP message to organization \textit{B} (Line 7). After receiving $ACP_A$, any node of organization \textit{B}, typically the leader, can verify the validity of the DAS commitment by running $\texttt{DASVer}$ (Line 8). Essentially, $\texttt{DASVer}$ involves checking both $\sigma_{\textsf{DAS}}$ and $\pi_{\textsf{DAS}}$. If both verifications are successful, the authenticity of the cross-domain data $\{d_i^A\}|_{i=1}^{n}$ within $\textsf{CTxSet}_A$ and the associated delay $\delta$ are confirmed. Consequently, this data set is certified as stable and tamper-proof on organization \textit{A}'s blockchain.

Finally, organization \textit{B} performs a sequential check of the transactions in $\textsf{CTxSet}_A$ (Lines 9-10). If the check does not conflict with the order maintained by the source chain, each corresponding cross-domain data $d_i^A$ is recorded onto the blockchain. Once the transactions stabilize on the target blockchain, the cross-domain data for this epoch is transferred. Similarly, the process of transferring data from organization \textit{B} to organization \textit{A} follows the same procedure (Line 11).

\section{System Details of AsyncSC} \label{Building}
In this section, we further describe the asynchronous transaction sequence guarantee mechanism and discuss the details of the delay settings of the DAS.

\subsection{Transaction Sequence Guarantee Mechanism} \label{Buffer}
To maintain the consistency of asynchronous \textsf{CTx}s, as shown in Fig. \ref{4-p}, we propose a restricted-readable transaction sequence guarantee mechanism. This mechanism can be implemented in various ways; here, we provide a buffer pool-based implementation idea. The analysis and optimization of its utility are left for future work.

The sequential check consists of five main steps, as shown in Fig. \ref{4-pool1}. \ding{202} \textit{Receiving}. When the \textsf{MC}'s committee receives the $\textsf{CTxSet}$ for each epoch, its leader maintains these \textsf{CTx}s in the local multilevel buffer pool. To improve efficiency and minimize conflicts, we draw inspiration from Ref. \cite{warner1995version}. The multilevel buffer pool uses an adaptive heap structure to categorize and store transactions according to different priorities (\textit{e.g.}, timestamp, importance, dependency). The adaptive heap dynamically adjusts the order based on the current load situation and the dependencies of the \textsf{CTx}s, ensuring that \textsf{CTx}s with high priority and early timestamps are located at the top of the heap. \ding{203} \textit{Ordering}. The \textsf{MC} leader dynamically adjusts the order of the pooled \textsf{CTx}s through a sliding window. It also dynamically adjusts the window size and movement speed based on real-time transaction traffic. \ding{204} \textit{Processing}. The leader of \textsf{SC} performs conflict checking on the \textsf{CTx}s in each $\textsf{CTxSet}$ received asynchronously based on the buffer pool maintained by \textsf{MC}, and then sends the processed set of \textsf{CTx}s to the committee for confirmation. \ding{205} \textit{Confirming}. The committee of \textsf{SC} uses smart contracts to automate the confirmation of the transaction order and ensure consistency through BFT consensus algorithms such as Tendermint \cite{buchman2016tendermint}. After confirming that there are no conflicts, the relevant \textsf{CTx}s are executed in an orderly manner. \ding{206} \textit{Removing}. Finally, after confirming that the \textsf{CTx}s have stabilized through a persistent query on the ledger, the leader of \textsf{MC} removes the executed \textsf{CTx}s from the buffer pool.

\begin{figure}[t]
	\vspace{-0.2cm}
	\subfigure[Checking steps.]{
		\begin{minipage}[t]{0.231\linewidth}
			\centering
			\hspace*{-3.5mm}
			\includegraphics[width=0.87in]{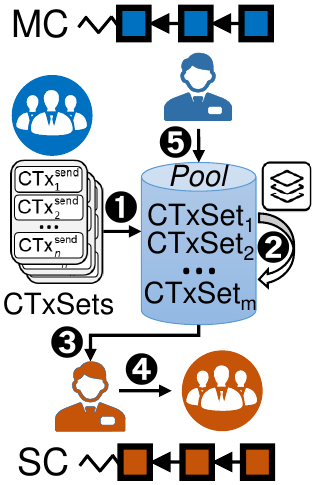}
			\label{4-pool1}
			\vspace{-0.4cm}
		\end{minipage}
	}
	\subfigure[\textsf{CTx} sequence restricted readable.]{
		\begin{minipage}[t]{0.7\linewidth}
			\centering
			\hspace*{-0.35cm}
			\includegraphics[width=2.6in]{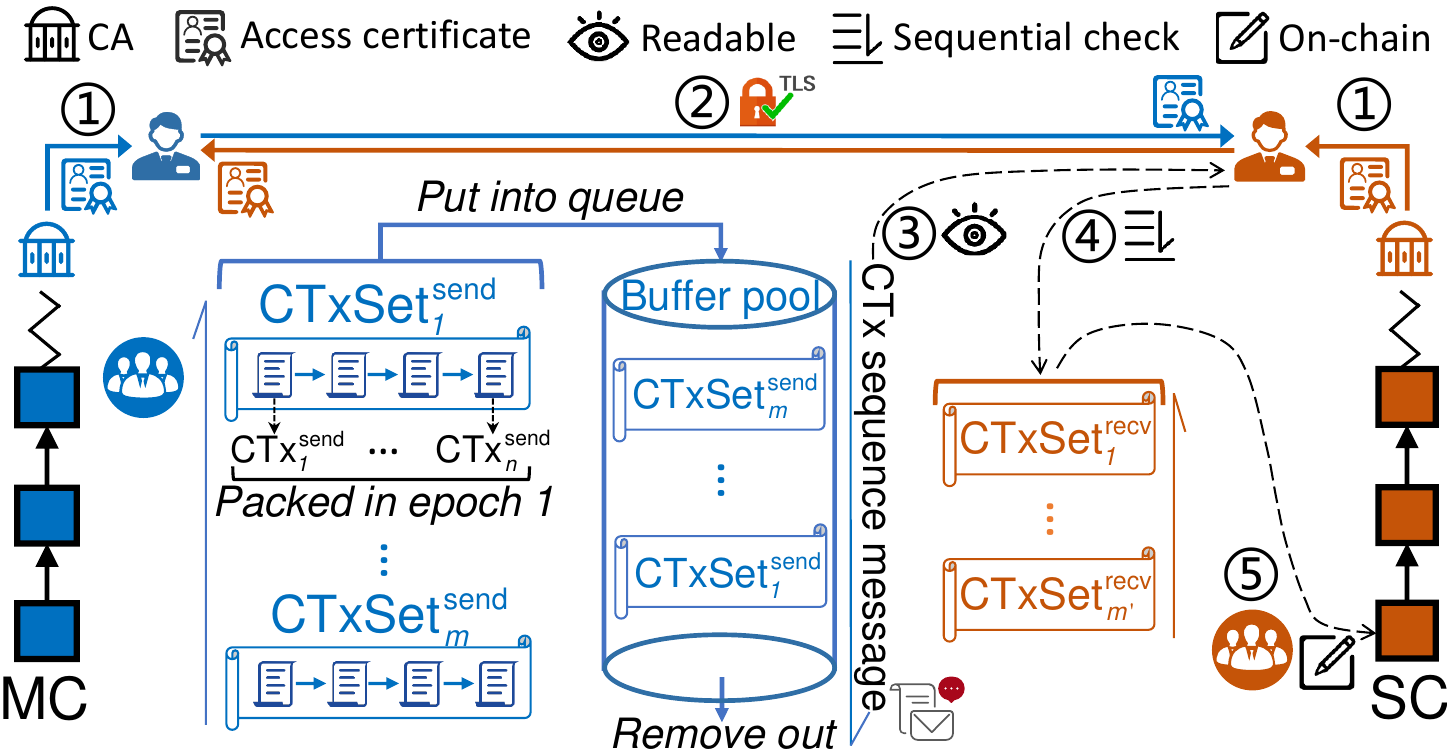}
			\label{4-pool}
			\vspace{-0.4cm}
		\end{minipage}
	}
	\centering
	\vspace{-0.3cm}
	\captionsetup{font=footnotesize, labelsep=period}
	\caption{Asynchronous transaction sequence guarantee mechanism.}
	\label{4-p}
	\vspace{-0.65cm}
\end{figure}

In the context of this paper, to adhere to the internal confidentiality of permissioned blockchains, the \textsf{CTx} sequence maintained by the buffer pool is restricted to be readable only by authenticated leaders, as shown in Fig. \ref{4-pool}. \ding{192} Upon leader nodes' access to the permissioned blockchains, certificate authorities (CAs) issue access certificates to them. \ding{193} When cross-chain interactions are required, leaders on both sides exchange certificates over a secure communication channel such as TLS to authorize each other. \ding{194} Then, the leader of \textsf{SC} can receive a signed message from \textsf{MC} containing only the transaction sequence in the buffer pool. \ding{195} The leader of \textsf{SC} then performs a sequential check on the received \textsf{CTx}s based on this message. \ding{196} The committee of \textsf{SC} completes the on-chain recording of \textsf{CTx}s after checking.

\subsection{Delay Settings for DAS}  \label{DelaySet}
The core idea of implementing DAS lies in the fact that the $\texttt{DASig}$ algorithm produces a controlled delay by performing a series of sequential computations on the input for a given number of steps, such as the iterative computation of the SNARK utilized in the IVC adopted in this paper. At the same time, it ensures that the results can be verified quickly. Under given hardware and algorithmic conditions, the time required to compute $\texttt{DASig}$ for a given length of input is fixed. This predictability of computation time allows the system designers to set specific delays as needed. We ensure that each computation goes through the same required steps and time by choosing the appropriate delay parameter $t$ in the parameter generation phase of the DAS.

With the workflow of Fig. \ref{3-entity} in mind, we now detail the settings for the size of the delay generated by the DAS. Our goal is to ensure that when \textsf{SC} receives the \textsf{ACP}, the delay $\delta$ generated by running $\texttt{DASig}$ guarantees that the \textsf{CTxSet}, \textit{i.e.}, all of the \textsf{CTx}s in an epoch, have stabilized on \textsf{MC}. Thus, the key is to make the last transaction of this epoch on-chain stable. We now discuss the size of the ideal delay $\delta$. These \textsf{CTx}s are recorded on \textsf{MC} in Fig. \ref{3-entity}-\ding{202}. Assume that the last \textsf{CTx} on-chain in this epoch is of length $T_{LastCTx}$ from the last time slot of the epoch. Thus, $T_{LastCTx}$ is included in the duration required for the stabilization of \textsf{CTxSet}. Additionally, the $\texttt{DASig}$ algorithm is started in Fig. \ref{3-entity}-\ding{204}, \textit{i.e.}, the time for the committee to carry out the signatures in Fig. \ref{3-entity}-\ding{203}, $T_{Sig}$, is already accounted for in the length of time required for the stabilization of \textsf{CTxSet}. Let the time required for a transaction to stabilize on the permissioned blockchain be $T_{Stab}$, then the ideal delay $\delta = T_{Stab} - T_{LastCTx} - T_{Sig}$. Since $T_{Stab}$ and $T_{Sig}$ are essentially constant in a specific permissioned blockchain and $T_{LastCTx}$ is easily obtained, it is feasible to set a suitable delay parameter $t$ to control $\delta$.

In practice, the delay $\delta$ of DAS is a tradeoff between security and efficiency, and we usually set $\delta$ slightly larger than its ideal duration to ensure stronger security. The optimization of this tradeoff is left for future work. Additionally, if the transaction volume on \textsf{MC} surges or the intra-chain communication deteriorates, the block may take longer to stabilize, thus requiring a longer $\delta$. For different blockchains, $\delta$ is adjusted according to the duration of their ledger state evolution \cite{zamyatin2021sok}.

\section{Security Analysis} \label{Security}
We define and prove the security of AsyncSC in this section.

Persistence and liveness are key properties that ensure blockchain security \cite{garay2015bitcoin}. In short, persistence ensures all honest participants agree on stable transactions and their order in the ledger. Liveness ensures transactions from any honest participant are eventually recorded in the ledger of all honest participants, guaranteeing continued ledger progress. Combined with the definition of correct cross-chain communication by Zamyatin \textit{et al.} \cite{zamyatin2021sok}, we define the security of AsyncSC as follows. 

\begin{definition} \label{def-Security}
	{\rm \textbf{(Security).}} For \textsf{MC} and \textsf{SC} with persistence and liveness, AsyncSC must have the following properties:
	
	- {\rm \textbf{Correctness.}} If both $\textsf{CTx}^{\textsf{send}}$ and $\textsf{CTx}^{\textsf{recv}}$ are as expected, they will be recorded in the ledgers $\textsf{L}_{\textsf{MC}}$ and $\textsf{L}_{\textsf{SC}}$, respectively. If either does not match, both parties will abort. Additionally, there is no case where $\textsf{CTx}^{\textsf{send}}$ or $\textsf{CTx}^{\textsf{recv}}$ is recorded in the ledger by only one party.
	
	- {\rm \textbf{Stability.}} If $\textsf{CTx}^{\textsf{send}}$ is as expected, then eventually $\textsf{CTx}^{\textsf{send}}$ will be recorded in $\textsf{L}_{\textsf{MC}}$ and $\textsf{CTx}^{\textsf{recv}}$ will be recorded in $\textsf{L}_{\textsf{SC}}$. Given a common prefix parameter $k \in \mathbb{N}$, the blocks containing $\textsf{CTx}^{\textsf{send}}$ and $\textsf{CTx}^{\textsf{recv}}$ are located more than $k$ blocks away from the ends of $\textsf{L}_{\textsf{MC}}$ and $\textsf{L}_{\textsf{SC}}$, respectively.
\end{definition}

The correctness property combines effectiveness and atomicity as described in Ref. \cite{zamyatin2021sok}. It is a safety property that ensures \textsf{CTx}s match the parties' expectations and maintains the consistency of cross-chain interactions. The stability property guarantees the eventual termination of the cross-chain protocol, making it a liveness property.

\begin{theorem}
	Provided that \textsf{MC} and \textsf{SC} satisfy persistence and liveness properties, AsyncSC is considered secure.
\end{theorem}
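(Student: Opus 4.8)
The plan is to establish the two components of Definition~\ref{def-Security}---Correctness (a safety property) and Stability (a liveness property)---separately, reducing each to the assumed persistence and liveness of \textsf{MC} and \textsf{SC} together with the security properties (unforgeability/soundness and sequentiality) of the DAS primitive. The conceptual point I would stress up front is that the delay $\delta$ enforced by \texttt{DASig} is a \emph{computational} (sequential-step) delay rather than a wall-clock one; hence the stabilization guarantee it provides survives arbitrary network asynchrony and does not rely on a global clock. This is what lets the synchronous confirmation-period argument of prior sidechains be replaced by a clock-free one.

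For Correctness I would argue atomicity through a one-directional dependency chain. On the \textsf{SC} side, $\textsf{CTx}^{\textsf{recv}}$ is recorded only after \texttt{DASVer} returns $1$ (Line~8 of Fig.~\ref{DAS-procedure}); by DAS unforgeability together with the \texttt{AKCheck} test (Line~6), a verifying ACP cannot be produced by the rational adversary and must originate from the honest-majority committee, so $\texttt{DASVer}=1$ genuinely attests that the committee aggregated signatures over $\textsf{CTxSet}$. Combined with the delay argument below and \textsf{MC} persistence, this certifies that $\textsf{CTx}^{\textsf{send}}$ is already stable in $\textsf{L}_{\textsf{MC}}$. Hence $\textsf{CTx}^{\textsf{recv}}$ recorded $\Rightarrow$ $\textsf{CTx}^{\textsf{send}}$ recorded, ruling out one-sided recording on the \textsf{SC} side. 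For the converse abort behaviour: if $\textsf{CTx}^{\textsf{send}}$ does not match expectations the honest committee will not sign it, no verifying ACP exists, and \textsf{SC} aborts at Line~8; a forged or malformed committee key triggers the Line~6 abort on the \textsf{MC} side. Effectiveness then follows from the liveness of the two chains: when both transactions are as expected, \textsf{MC} liveness records $\textsf{CTx}^{\textsf{send}}$ and \textsf{SC} liveness records $\textsf{CTx}^{\textsf{recv}}$.

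For Stability I would chain three facts. First, \textsf{MC} liveness guarantees $\textsf{CTx}^{\textsf{send}}$ is eventually placed on-chain. Second, the delay setting of Section~\ref{DelaySet}, $\delta \ge T_{Stab}-T_{LastCTx}-T_{Sig}$, ensures that by the time \texttt{DASig} outputs the ACP the block holding $\textsf{CTx}^{\textsf{send}}$ is buried more than $k$ blocks deep in $\textsf{L}_{\textsf{MC}}$---exactly the common-prefix requirement---and by DAS sequentiality the adversary cannot compress $\delta$ to cheat this stabilization bound. Third, the retransmission rule triggered by the threshold $\Delta_{async}$, together with the assumption that $\Pr[\Delta=\infty]$ is negligible, ensures \textsf{SC} eventually receives a verifying ACP; \textsf{SC} liveness then records $\textsf{CTx}^{\textsf{recv}}$, and after \textsf{SC}'s own stabilization wait, \textsf{SC} persistence places it more than $k$ blocks from the end of $\textsf{L}_{\textsf{SC}}$.

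I expect the liveness half to be the main obstacle, for two reasons. The delicate point is justifying eventual delivery and continued progress without a global clock: the argument must lean entirely on the retransmission mechanism plus the negligible-$\Delta=\infty$ assumption, and must additionally show that the restricted-readable sequential check of Section~\ref{Buffer} cannot itself deadlock---i.e., that the dependency ordering imposed by the buffer pool is eventually resolvable (acyclic, or drained by the sliding window) so that no honest $\textsf{CTx}$ is blocked forever. The second subtlety is making the computational-delay-to-stabilization translation rigorous: I would need DAS sequentiality to rule out an adversary who precomputes or parallelizes \texttt{DASig} so as to emit an ACP before $\textsf{CTx}^{\textsf{send}}$ has actually reached $k$-block depth, which is precisely where the soundness of the clock-free stabilization claim rests.
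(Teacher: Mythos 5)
Your proposal is correct and rests on the same two pillars as the paper's proof: honest-majority committees plus DAS unforgeability (\texttt{AKCheck}, \texttt{DASVer}) rule out forged proofs, and chain liveness plus the $\Delta_{async}$ retransmission rule yield eventual completion. But the decomposition is genuinely different. The paper enumerates the two one-sided violation states explicitly --- \textbf{C1:} $\textsf{CTx}^{\textsf{send}} \in \textsf{L}_{\textsf{MC}} \wedge \textsf{CTx}^{\textsf{recv}} \notin \textsf{L}_{\textsf{SC}}$, resolved by retransmission into the both-absent state or reduced to corrupting a committee majority; and \textbf{C2:} the converse, reduced either to corrupting a majority of $S^{Org}$ on the \textsf{SC} side or to forging $apk$ and the DAS output past \texttt{AKCheck}/\texttt{DASVer}, both negligible. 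You instead prove the implication ``$\textsf{CTx}^{\textsf{recv}}$ recorded $\Rightarrow$ $\textsf{CTx}^{\textsf{send}}$ stable'' and push the C1-type case into your liveness argument; note that your premise that \textsf{SC} records only after $\texttt{DASVer}=1$ itself requires the honest-majority assumption on the \textsf{SC} organization (the paper's first C2 sub-case), which you use only implicitly. Two further contrasts are worth recording. First, you invoke DAS \emph{sequentiality} to argue the adversary cannot compress $\delta$ and emit an ACP before the $k$-block depth of Definition~\ref{def-Security} is reached; the paper's proof never does this, citing only collision resistance and IVC soundness, so your version makes the clock-free stabilization claim more rigorous than the original (the paper treats the delay guarantee as a design-time fact of Section~\ref{DelaySet} rather than a proof obligation). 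Second, and conversely, the ordering question you flag as your main open obstacle is closed by the paper in one line: since \textsf{MC} satisfies persistence, the sequential check lets $\textsf{CTx}^{\textsf{recv}}$ inherit the send-side order, with no attempt at the buffer-pool non-deadlock argument you anticipate needing. So your route is sound and stronger on the delay side, while the paper's is shorter because it leans on persistence for ordering and on explicit case enumeration for atomicity.
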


\begin{proof}
	By Definition \ref{def-Security}, an arbitrary probabilistic polynomial time (PPT) adversary $\mathcal{A}$ exists in two cases if it violates the correctness property. The notation can be expressed as:
	
	$\bullet$ \textbf{C1:} $\textsf{CTx}^{\textsf{send}} \in \textsf{L}_{\textsf{MC}} \wedge \textsf{CTx}^{\textsf{recv}} \notin \textsf{L}_{\textsf{SC}}$. 
	
	$\bullet$ \textbf{C2:} $\textsf{CTx}^{\textsf{send}} \notin \textsf{L}_{\textsf{MC}} \wedge \textsf{CTx}^{\textsf{recv}} \in \textsf{L}_{\textsf{SC}}$.
	
	For \textbf{C1}, the possible reason is that $\mathcal{A}$ delays/denies sending the ACP to \textsf{SC}. As we mentioned, \textsf{MC} will retransmit \textsf{CTx}s that exceed the threshold $\Delta_{async}$, so the final state will be $\textsf{CTx}^{\textsf{send}} \notin \textsf{L}_{\textsf{MC}} \wedge \textsf{CTx}^{\textsf{recv}} \notin \textsf{L}_{\textsf{SC}}$. If the committee of \textsf{MC} forges signatures, then it needs to corrupt at least $\frac{|S^{C}|}{2} + 1$ members. This has a negligible probability of happening in an honest-majority permissioned blockchain, and violates the security of committee elections \cite{huang2023scheduling, zhai2024secret, boehmer2024approval}.
	
	For \textbf{C2}, the possible reason is that the organization of \textsf{SC} forges signatures. Similarly, $\mathcal{A}$ needs to corrupt at least $\frac{|S^{Org}|}{2}+1$ nodes, which has negligible probability. If $\mathcal{A}$ corrupts the leader of \textsf{MC} to forge ACP and make \textsf{SC} include $\textsf{CTx}^{\textsf{recv}}$ in $\textsf{L}_{\textsf{MC}}$, then it needs to forge the \textsf{MC} committee's aggregate public key and the DAS output to pass $\texttt{DASVer}$. This obviously cannot hold, since a fake $apk$ cannot pass $\texttt{AKCheck}$. Furthermore, the security of DAS relies on the collision-resistant hash function \cite{boldyreva2002efficient}, and the soundness and correctness of the IVC scheme \cite{boneh2018verifiable}. The probability that a PPT adversary $\mathcal{A}$ breaks through a publicly setup DAS is negligible.

	Thus, the probability that $\mathcal{A}$ successfully violates the correctness property is negligible.
	
	We further analyze whether $\mathcal{A}$ can violate the stability property. Suppose there is a valid $\textsf{CTx}^{\textsf{send}}$. Since \textsf{MC} satisfies the liveness property, it will eventually stabilize in $\textsf{L}_{\textsf{MC}}$, \textit{i.e.}, $\textsf{CTx}^{\textsf{send}} \in \textsf{L}_{\textsf{MC}}$. During the protocol run, $\textsf{CTx}^{\textsf{send}}$ will first be packed into the \textsf{CTxSet}. Then, the committee of \textsf{MC} signs the \textsf{CTxSet}. The leader of \textsf{MC} generates the ACP and sends it to \textsf{SC}. After the leader of \textsf{SC} verifies successfully, a sequential check of the \textsf{CTx}s in \textsf{CTxSet} will be performed. Since \textsf{MC} satisfies persistence, $\textsf{CTx}^{\textsf{recv}}$ can maintain the same order as $\textsf{CTx}^{\textsf{send}}$. Thus, valid $\textsf{CTx}^{\textsf{recv}}$ will be recorded in $\textsf{L}_{\textsf{SC}}$. Since \textsf{SC} satisfies liveness, the final state is $\textsf{CTx}^{\textsf{recv}} \in \textsf{L}_{\textsf{SC}}$. Thus, the stability of AsyncSC follows from this.
	
	Therefore, AsyncSC satisfies correctness and stability, proving its security.
\end{proof}	

\begin{figure*}[t]
	\vspace{-0.3cm}
	\centering
	\hspace{-0.8cm}
	\subfigure[Throughput vs. the \# of \textsf{CTx}s.]{
		\begin{minipage}[t]{0.23\linewidth}
			\centering
			\includegraphics[width=1.735in]{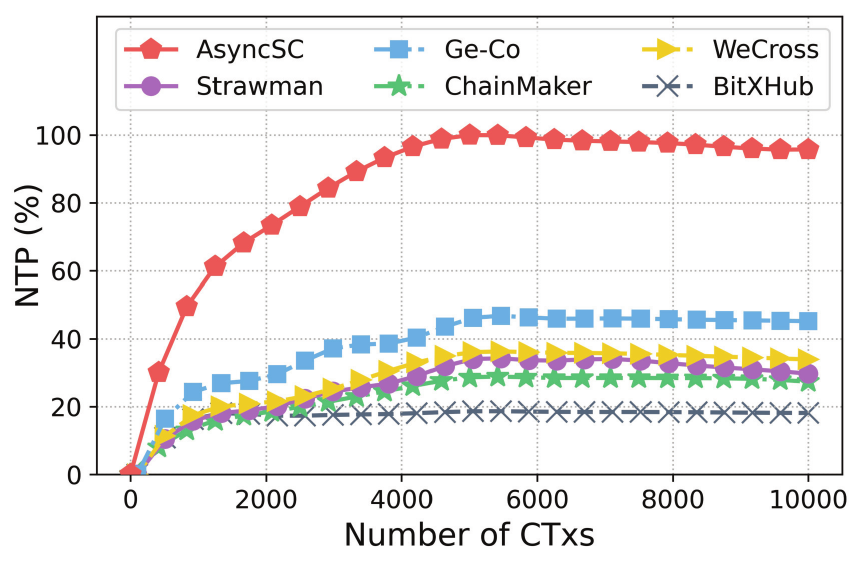}
			\label{TPS-6schemes}
			\vspace{-0.8cm}
		\end{minipage}
	}
	\hspace{-0.2cm}
	\subfigure[Latency vs. the \# of \textsf{CTx}s.]{
		\begin{minipage}[t]{0.23\linewidth}
			\centering
			\includegraphics[width=1.705in]{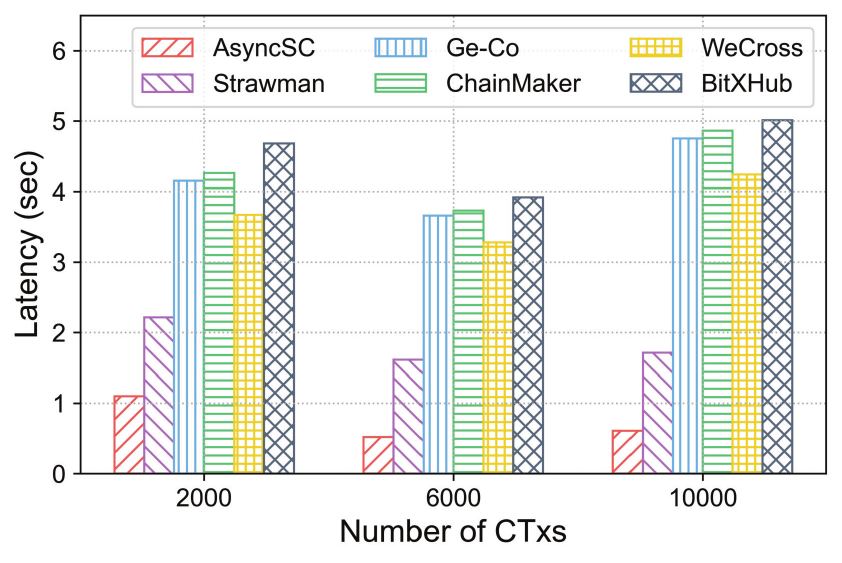}
			\label{Latency-6schemes}
			\vspace{-0.8cm}
		\end{minipage}%
	}%
	\hspace{-0.01cm}
	\subfigure[Success ratio vs. the timeout ratio.]{
		\begin{minipage}[t]{0.23\linewidth}
			\centering
			\includegraphics[width=1.725in]{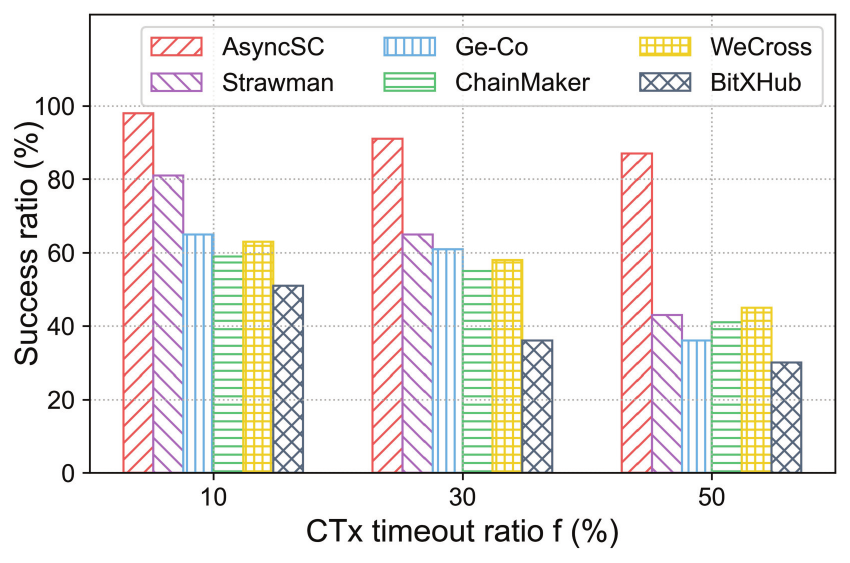}
			\label{SuccRatio-6schemes}
			\vspace{-0.8cm}
		\end{minipage}
	}%
	\hspace{-0.1cm}
	\subfigure[Overhead vs. the \# of \textsf{CTx}s.]{
		\begin{minipage}[t]{0.23\linewidth}
			\centering
			\includegraphics[width=1.95in]{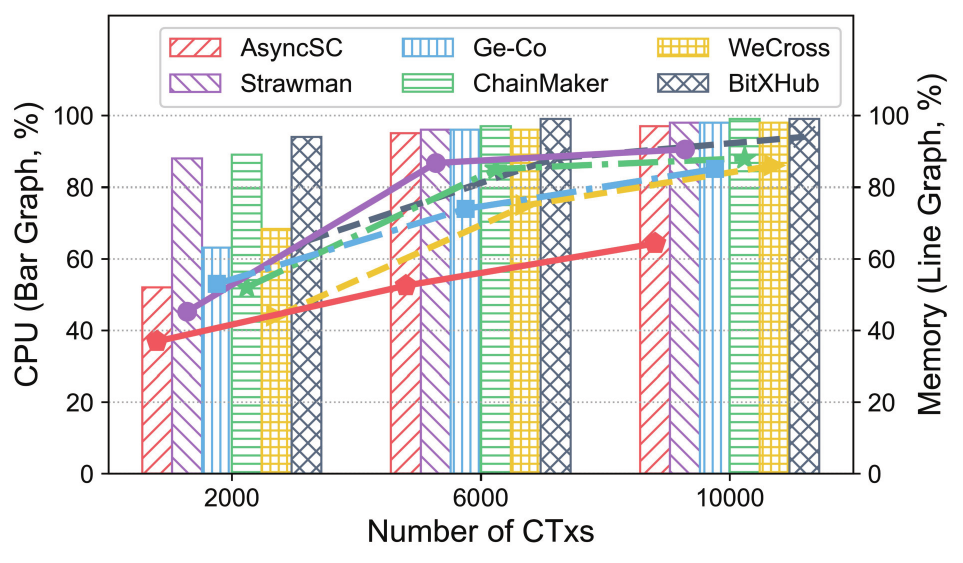}
			\label{Source-6schemes}
			\vspace{-0.8cm}
		\end{minipage}
	}%
	\centering
	\vspace{-0.3cm}
	\captionsetup{font=footnotesize, labelsep=period}
	\caption{Performance comparison of different baselines across various metrics.}
	\label{performance1}
	\vspace{-0.65cm}
\end{figure*}

\section{Performance Evaluation} \label{Evaluation}
\subsection{Implementation} \label{Implementation}
To evaluate AsyncSC, we implement a prototype based on two well-known permissioned blockchains: Hyperledger Fabric \cite{androulaki2018hyperledger} and ChainMaker \cite{chainmaker}, using the \texttt{Go}. We realize DAS by modifying the multi-signature \texttt{MuSig2}\footnote{https://github.com/aureleoules/musig2-coordinator} and Harmony's VDF\footnote{https://github.com/harmony-one/vdf}. The prototype code will be made public soon.

\textbf{Baselines:} For a fair comparison, the baselines for evaluating AsyncSC include two cross-chain prototypes we developed and three enterprise-level projects supporting cross-chain interactions on permissioned blockchains. Their names and core ideas are as follows: (i) \textit{Strawman}. An incomplete AsyncSC prototype using SPV proof for each \textsf{CTx} as a commitment. (ii) \textit{Ge-Co} \cite{yin2023sidechains}. A sidechain scheme that uses committee voting for threshold signatures. (iii) \textit{ChainMaker} \cite{chainmaker}. Uses cross-chain proxy, SPV, and transaction contracts, supporting cross-chain operations with Fabric. (iv) \textit{WeCross} \cite{WeCross}. Relay-based cross-chain routing that enables data interoperability between Fabric and FISCO BCOS \cite{FISCO}. (v) \textit{BitXHub} \cite{shao2020bitxhub}. Provides an InterBlockchain Transfer Protocol (IBTP) based on relay chains for reliable routing between permissioned blockchains.

\subsection{Settings}
\textbf{Testbed:} It consists of 16 virtual machines, each equipped with an Intel i7-13700 CPU, 8 GB of RAM, and a 5 Mbps network link. To simulate semi-synchronous/asynchronous communication, we use the \texttt{tc}\footnote{https://linux.die.net/man/8/tc} command on each machine to control transmission delay $\Delta$ (100 $\sim$ 500 ms) and network jitter ($\pm$ 25 ms), and to generate \textsf{CTx} timeouts (\textit{i.e.}, $\Delta > \Delta_{async}$) with a ratio of $f$. We launch 20 nodes on each machine using Docker containers. We run Hyperledger Fabric v2.4 on half of the machines and ChainMaker v2.1 on the other half to simulate cross-domain data exchanges using heterogeneous permissioned blockchains by different domains. Additionally, the versions of the compared projects are WeCross v1.2 and BitXHub v1.11.

\textbf{Dataset:} We use the real-world IoT dataset \texttt{TON\_IoT}\footnote{https://research.unsw.edu.au/projects/toniot-datasets}, which contains over two million pieces of raw telemetry data collected by weather and Modbus sensors \cite{booij2021ton_iot}. Blockchains from different organizations first store different data and then exchange a number of data per epoch.

\textbf{Metrics:} The following metrics are used to evaluate the performance of cross-chain interactions. (i) \textit{Throughput}. The number of successful executions of \textsf{CTx} per second by the system (TPS). (ii) \textit{Latency}. The duration for a \textsf{CTx} to stabilize on \textsf{SC} from the time it is initiated by \textsf{MC}. (iii) \textit{Success ratio}. The percentage of successful executions in initiated \textsf{CTx}s (TSR). (iv) \textit{CPU and memory utilization}. The resource consumption of the system during execution.

\subsection{Performance Comparison}
We perform comprehensive \textsf{CTx}-driven simulations by adjusting the test parameters. The number of \textsf{CTx} per test ranges from 1,000 to 10,000. We control $\Delta$ to follow a normal distribution between 100 and 500 ms to simulate the typical semi-synchronous communication between IoT devices. At the same time, we control some of the \textsf{CTx}s to timeout (we set $\Delta_{async}=10$ s) proportionally from 10\% to 50\% to simulate asynchronous communication. We set the epoch duration to 600 ms. The block generation time for each blockchain is 100 ms. A block is considered stabilized if it is confirmed by the next five blocks (\textit{i.e.}, $T_{Stab}=500$ ms). The detailed experimental results are shown in Fig. \ref{performance1}.

\subsubsection{Throughput} We first normalize the throughput results (NTP) to eliminate environmental differences between the different schemes and facilitate comparisons. The value of NTP reflects the efficiency of the system in highly concurrent data transfer. Fig. \ref{TPS-6schemes} shows the curves of throughput with an increasing number of \textsf{CTx}. Initially, the NTP of all schemes rises with the number of \textsf{CTx}s. The performance bottleneck is approached when the number of \textsf{CTx}s reaches 4,000 to 6,000. However, AsyncSC shows an average increase in throughput of 1.21 $\sim$ 3.96 $\times$ compared to other schemes. The reason for AsyncSC's advantage in throughput is twofold. First, AsyncSC improves proof efficiency by batch committing packaged \textsf{CTx}s through ACP. Second, the test simulates asynchronous communication with timed-out \textsf{CTx}s, where traditional synchronous schemes experience more transaction aborts. In contrast, AsyncSC demonstrates the benefits of asynchronous concurrency. Additionally, the actual throughput and our improvements for all schemes are shown in Table II.

\begin{table}[ht]
	\vspace{-0.3cm}
	\centering
	\captionsetup{font=footnotesize}
	\caption{\textsc{Throughput of AsyncSC and Baselines.}}
	\vspace{-0.2cm}
	\resizebox{1\columnwidth}{!}{
	\begin{tabular}{lcccccc}
		\toprule
		Schemes  & Strawman & Ge-Co & ChainMaker & WeCross & BitXHub & \textbf{AsyncSC} \\
		\midrule
		Avg. TPS & 457.93 & 652.70 & 402.92 & 497.40 & 290.82 & \textbf{1442.47} \\
		Avg. increase & 2.15 $\times$ & 1.21 $\times$ & 2.58 $\times$ & 1.90 $\times$ & 3.96 $\times$ & - \\
		\bottomrule
	\end{tabular}
	}
	\label{TPS-6s}
	\vspace{-0.2cm}
\end{table}

\subsubsection{Latency} Fig. \ref{Latency-6schemes} shows the average latency from initiation to stabilization for each \textsf{CTx} at transaction numbers of 2,000, 6,000, and 10,000. The results show that AsyncSC reduces the latency by 59.76\% to 83.61\% compared to the other schemes. Notably, the average latency of AsyncSC is 0.74 s. Although there exists a message delay, AsyncSC uses ACP to commit an epoch of packed \textsf{CTx}s. Even if part of the transaction times out and retransmits, the asynchronous concurrency greatly reduces the overall transaction latency. AsyncSC's advantages in asynchronous large-scale data exchange scenarios will be more pronounced.

\subsubsection{Success ratio} Fig. \ref{SuccRatio-6schemes} shows the average success ratio of 10,000 \textsf{CTx}s tested with timeout ratio $f$ set to 10\%, 30\%, and 50\%. The results show that AsyncSC increases the TSR by 46.03\% to 135.9\% compared to the other schemes. Furthermore, the relative improvement of AsyncSC is more pronounced as $f$ increases. We find that schemes under traditional synchronous settings struggle to handle \textsf{CTx} delays or timeouts properly. Even if they can, transactions are usually aborted due to conflicts. However, AsyncSC does not require \textsf{MC} to continuously monitor the state of \textsf{CTx}s, and its asynchronous transaction sequence guarantee mechanism can effectively reduce the occurrence of transaction conflicts.

\subsubsection{CPU and memory utilization} Fig. \ref{Source-6schemes} shows the main results of the different schemes in terms of resource overhead during data exchange. It can be seen that the systems are almost fully loaded when the number of \textsf{CTx}s reaches 6,000, and thus the performance reaches a bottleneck. However, AsyncSC reduces the overall CPU and memory utilization by an average of 5.1\% to 16.44\% and 25.02\% to 37.77\%, respectively. This is due to the fact that AsyncSC uses ACP for batch \textsf{CTx} commitment and reduces the overhead of monitoring and synchronizing states between blockchains.

\begin{table}[t]
	\vspace{-0.2cm}
	\centering
	\captionsetup{font=footnotesize}
	\caption{\textsc{Comparison of Cross-Chain Proofs.}}
	\vspace{-0.2cm}
	\resizebox{1\columnwidth}{!}{
		\begin{threeparttable}
			\begin{tabular}{lcccc}
				\toprule
				Schemes & PoW sidechains \cite{kiayias2019proof} & PoS sidechains \cite{gavzi2019proof} & Ge-Co \cite{yin2023sidechains} & \textbf{AsyncSC} \\
				\midrule
				Proof size$^1$ & $\mathcal{O}(\log cl)$ & $\mathcal{O}(k)$ & $\mathcal{O}(1)$ & $\mathcal{O}(1)$ \\
				Comp. cost$^2$ & $\mathcal{O}(\log cl)$ & $\mathcal{O}(k)$ & $\mathcal{O}(|S^{C}|)$ & $\mathcal{O}(\frac{|S^{C}|}{n})$ \\
				\midrule
				Gen. time$^3$ & 1205.12 s & 54.54 s & 3.91 s & \textbf{512.3 ms} \\	
				Ver. time & 30.97 s & 24.72 s & 618.37 ms & \textbf{83.6 ms} \\		
				\bottomrule
			\end{tabular}
			\begin{tablenotes}
				\item[$1$] The symbol $cl$ denotes the chain length, and $k$ is the common prefix parameter.
				\item[$2$] The $|S^{C}|$ denotes committee size, and $n$ denotes the number of \textsf{CTx}s in an epoch.
				\item[$3$] Measure the time to generate cross-chain proofs for an epoch containing 1,200 \textsf{CTx}s. We set $cl \approx 2.03 \times 10^7$ (as of July 2024 on ETH); $k=2160$; and a committee size of 10.
			\end{tablenotes}
		\end{threeparttable}
	}
	\label{Proofs-6s}
	\vspace{-0.6cm}
\end{table}

\subsection{Evaluation of ACP}
Table \ref{Proofs-6s} provides a theoretical comparison between AsyncSC and some SOTA sidechain constructions. AsyncSC is based on the DAS realization of generating ACPs for $n$ \textsf{CTx}s within an epoch. The cross-chain proofs are aggregated from the committee's signatures. In addition, delayed proofs are implemented in DAS using IVC iterative computations. For any sub-exponential length computation, the complexity in terms of proof size and verification cost is limited by \textsf{poly}$(\lambda)$. This part of the proof size and computational cost is negligible \cite{boneh2018verifiable}. Its size is not affected by the chain length or the common prefix. Thus, the proof size complexity of AsyncSC is $\mathcal{O}(1)$. The computational cost shared by a single \textsf{CTx} is $\mathcal{O}(\frac{|S^{C}|}{n})$. AsyncSC is more lightweight than the SOTA constructions.

We measure the generation and verification times for cross-chain proofs. The results in Table \ref{Proofs-6s} show that the average generation and verification times of AsyncSC are 512.3 ms and 83.6 ms, respectively. Since AsyncSC needs to generate only one cross-chain proof for the \textsf{CTx}s of an epoch, which can be verified quickly, it reduces the proof generation and verification times by 1 to 4 and 1 to 3 orders of magnitude, respectively, compared to SOTA schemes.

Next, we vary the size of $|S^{C}|$ and $n$ to evaluate the system throughput versus DAS delay of ACP, as shown in Fig. \ref{DAS-2}. We vary the DAS delay within the range $\delta = 400 \sim 600$ ms. Since $\Delta$ is a minimum of 100 ms and $T_{Stab}=500$ ms, $\delta$ ensures \textsf{CTx} stability. The results show that throughput increases and then decreases with the increase of $\delta$. This is because the size of $\delta$ affects the number of \textsf{CTx} arriving at an epoch, which in turn affects the adequacy of ACP aggregation. Fig. \ref{TPSvsDelay1} shows the results of varying $|S^{C}|$ for a fixed \textsf{CTx} arrival rate (AR) of 2,000 per second. It can be seen that the larger $|S^{C}|$ is, the lower the throughput, but its effect is not significant. This is because the larger $|S^{C}|$, the more committee signatures need to be aggregated, and the aggregation process is rapid. In addition, we control the number of \textsf{CTx} per epoch (\textit{i.e.}, $n$) by varying the AR. Fig. \ref{TPSvsDelay2} shows the effect of different ARs on throughput. An AR that is too small or too large results in too few or too many \textsf{CTx}s being processed at each epoch, potentially causing the system to be underloaded or overloaded, thereby affecting system throughput.

\begin{figure}[t]
	\vspace{-0.5cm}
	\subfigure[\textsf{CTx} arrival rate = 2000 \textsf{CTx}/sec.]{
		\begin{minipage}[t]{0.45\linewidth}
			\centering
			\hspace*{-3mm}
			\includegraphics[width=1.75in]{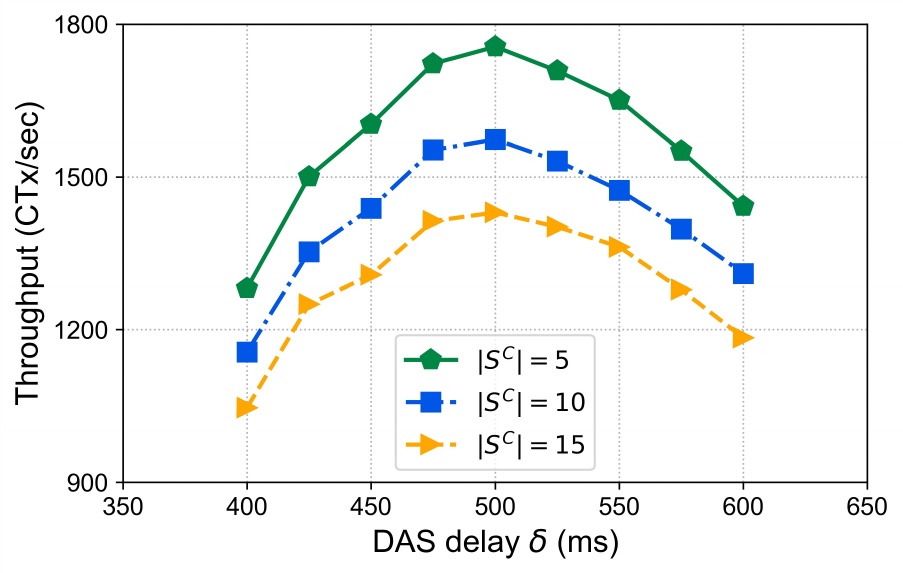}
			\label{TPSvsDelay1}
			\vspace{-0.4cm}
		\end{minipage}
	}
	\subfigure[Committee size $|S^{C}|$ = 5.]{
		\begin{minipage}[t]{0.45\linewidth}
			\centering
			\hspace*{-1.5mm}
			\includegraphics[width=1.75in]{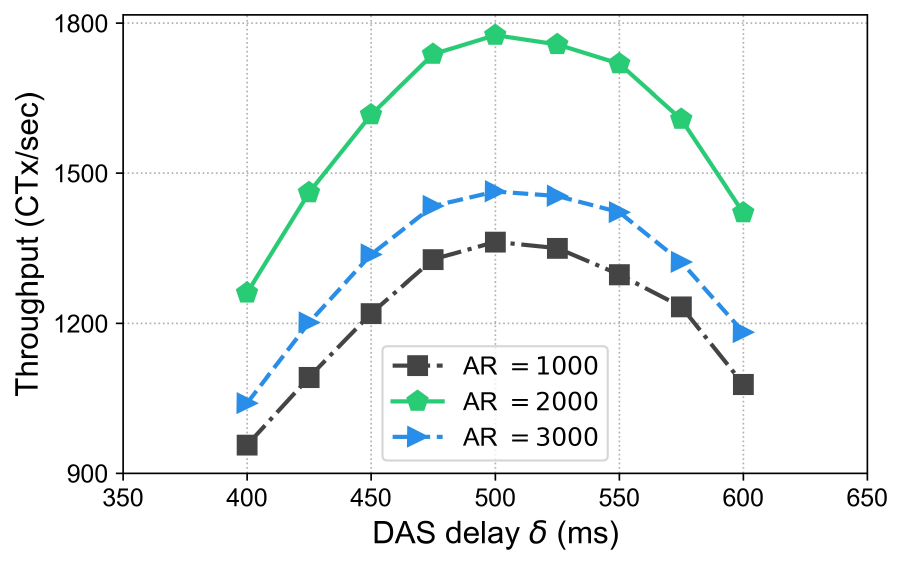}
			\label{TPSvsDelay2}
			\vspace{-0.4cm}
		\end{minipage}
	}
	\centering
	\vspace{-0.3cm}
	\captionsetup{font=footnotesize, labelsep=period}
	\caption{Evaluating system throughput vs. ACP's DAS delay.}
	\label{DAS-2}
	\vspace{-0.65cm}
\end{figure}

\begin{table}[h]
	\vspace{-0.3cm}
	\centering
	\captionsetup{font=footnotesize}
	\caption{\textsc{Evaluating of \textsf{CTx} Buffer Pool.}}\vspace{-0.2cm}
	\resizebox{1\columnwidth}{!}{
		\begin{tabular}{cccccccccc}
			\toprule
			Metrics & \multicolumn{3}{c}{Queue size (\# of \textsf{CTx})} & \multicolumn{3}{c}{Window size (\# of \textsf{CTx})} & \multicolumn{3}{c}{Heap priority} \\
			\midrule
			Options & 1000 & \textbf{2000} & 3000 & 200 & \textbf{400} & 600 & Time. & Imp. & \textbf{Dep.} \\
			\midrule
			TSR (\%)  & 90.45 & \textbf{92.31} & 91.17 & 91.36 & \textbf{92.52} & 91.02 & 92.12 & 90.87 & \textbf{94.45} \\
			\bottomrule
		\end{tabular}%
	}
	\label{Eval-poll}%
	\vspace{-0.6cm}
\end{table}%

\subsection{Evaluation of \textsf{CTx} buffer pool}
To verify the effectiveness of the transaction sequence guarantee mechanism in reducing conflicts, we evaluate the impact of different metric options for the buffer pool on the transaction success ratio. We attach random weights and 20\% dependencies to the transactions. The timeout rate $f=30\%$. Table \ref{Eval-poll} presents the average results of our multiple control variable tests. Each metric displays three key options. The results indicate that moderately increasing the queue size can improve TSR, but an excessively large queue may reduce processing efficiency. A larger window helps improve TSR, but an excessively large window may increase the probability of conflicts. The dependency prioritization strategy improves TSR more than timestamp and importance prioritization, indicating that considering inter-transaction dependencies in multilevel buffer pools is effective.

\section{Conclusion} \label{conclusion}
To facilitate the trusted exchange of IoT multi-domain data, we propose AsyncSC, an efficient asynchronous sidechain construction. We present Cross-Blockchain as a Service (C-BaaS) based on a committee. To provide asynchronous cross-chain proofs (ACPs), we introduce a novel cryptographic primitive called the delayed aggregate signature (DAS). To ensure the consistency of asynchronous cross-chain transactions, we design a transaction sequence guarantee mechanism. Our analysis demonstrates the security of AsyncSC. A prototype evaluation of AsyncSC shows that it outperforms SOTA schemes. Our future work will focus on optimizing the guarantee of asynchronous transaction consistency and DAS delay strategies.

\normalem 
\bibliographystyle{IEEEtran}
\balance 
\bibliography{IEEEabrv, cite}

\begin{thebibliography}{10}
\providecommand{\url}[1]{#1}
\csname url@samestyle\endcsname
\providecommand{\newblock}{\relax}
\providecommand{\bibinfo}[2]{#2}
\providecommand{\BIBentrySTDinterwordspacing}{\spaceskip=0pt\relax}
\providecommand{\BIBentryALTinterwordstretchfactor}{4}
\providecommand{\BIBentryALTinterwordspacing}{\spaceskip=\fontdimen2\font plus
\BIBentryALTinterwordstretchfactor\fontdimen3\font minus
  \fontdimen4\font\relax}
\providecommand{\BIBforeignlanguage}[2]{{%
\expandafter\ifx\csname l@#1\endcsname\relax
\typeout{** WARNING: IEEEtran.bst: No hyphenation pattern has been}%
\typeout{** loaded for the language `#1'. Using the pattern for}%
\typeout{** the default language instead.}%
\else
\language=\csname l@#1\endcsname
\fi
#2}}
\providecommand{\BIBdecl}{\relax}
\BIBdecl

\bibitem{he2022collaborative}
S.~He, K.~Shi, C.~Liu, B.~Guo, J.~Chen, and Z.~Shi, ``Collaborative sensing in
  internet of things: A comprehensive survey,'' \emph{IEEE Communications
  Surveys \& Tutorials}, vol.~24, no.~3, pp. 1435--1474, 2022.

\bibitem{liu2018survey}
J.~Liu, H.~Shen, H.~S. Narman, W.~Chung, and Z.~Lin, ``A survey of mobile
  crowdsensing techniques: A critical component for the internet of things,''
  \emph{ACM Transactions on Cyber-Physical Systems}, vol.~2, no.~3, pp. 1--26,
  2018.

\bibitem{Cisco2023}
\BIBentryALTinterwordspacing
Cisco. Powering an inclusive, digital future for all. [Online]. Available:
  \url{https://newsroom.cisco.com/c/r/newsroom/en/us/a/y2023/m01/powering-an-inclusive-digital-future-for-all.html}
\BIBentrySTDinterwordspacing

\bibitem{liu2023survey}
Y.~Liu, J.~Wang, Z.~Yan, Z.~Wan, and R.~J{\"a}ntti, ``{A survey on
  blockchain-based trust management for Internet of Things},'' \emph{IEEE
  Internet of Things Journal}, vol.~10, no.~7, pp. 5898--5922, 2023.

\bibitem{tong2022blockchain}
W.~Tong, X.~Dong, Y.~Shen, X.~Jiang \emph{et~al.}, ``A blockchain-driven data
  exchange model in multi-domain iot with controllability and parallelity,''
  \emph{Future Generation Computer Systems}, vol. 135, pp. 85--94, 2022.

\bibitem{tong2023ti}
W.~Tong, X.~Dong, Y.~Zhang, Z.~Zhang, L.~Yang, W.~Yang, and Y.~Shen,
  ``{TI-BIoV: Traffic information interaction for blockchain-based IoV with
  trust and incentive},'' \emph{IEEE Internet of Things Journal}, 2023.

\bibitem{dib2018consortium}
O.~Dib, K.-L. Brousmiche, A.~Durand, E.~Thea, and E.~B. Hamida, ``Consortium
  blockchains: Overview, applications and challenges,'' \emph{Int. J. Adv.
  Telecommun}, vol.~11, no.~1, pp. 51--64, 2018.

\bibitem{feng2021consortium}
Y.~Feng, W.~Zhang, X.~Luo, and B.~Zhang, ``{A consortium blockchain-based
  access control framework with dynamic orderer node selection for 5G-enabled
  industrial IoT},'' \emph{IEEE Transactions on Industrial Informatics},
  vol.~18, no.~4, pp. 2840--2848, 2021.

\bibitem{zhang2024time}
M.~Zhang, Q.~Qu, L.~Ning, and J.~Fan, ``On time-aware cross-blockchain data
  migration,'' \emph{Tsinghua Science and Technology}, vol.~29, no.~6, pp.
  1810--1820, 2024.

\bibitem{Back2014EnablingBI}
A.~Back, M.~Corallo, L.~Dashjr, M.~Friedenbach, G.~Maxwell, A.~Miller,
  A.~Poelstra, J.~Tim{\'o}n, and P.~Wuille, ``Enabling blockchain innovations
  with pegged sidechains,'' vol.~72, 2014, pp. 201--224.

\bibitem{kiayias2019proof}
A.~Kiayias and D.~Zindros, ``Proof-of-work sidechains,'' in \emph{International
  Conference on Financial Cryptography and Data Security (FC)}.\hskip 1em plus
  0.5em minus 0.4em\relax Springer, 2019, pp. 21--34.

\bibitem{gavzi2019proof}
P.~Ga{\v{z}}i, A.~Kiayias, and D.~Zindros, ``Proof-of-stake sidechains,'' in
  \emph{2019 IEEE Symposium on Security and Privacy (SP)}, 2019, pp. 139--156.

\bibitem{yin2021sidechains}
L.~Yin, J.~Xu, and Q.~Tang, ``Sidechains with fast cross-chain transfers,''
  \emph{IEEE Transactions on Dependable and Secure Computing}, vol.~19, no.~6,
  pp. 3925--3940, 2021.

\bibitem{kwon2019cosmos}
J.~Kwon and E.~Buchman, ``Cosmos whitepaper,'' \emph{A Netw. Distrib. Ledgers},
  2019.

\bibitem{FISCO}
\BIBentryALTinterwordspacing
{FISCO BCOS}. [Online]. Available: \url{http://www.fisco-bcos.org/}
\BIBentrySTDinterwordspacing

\bibitem{shao2020bitxhub}
S.~Ye, X.~Wang, C.~Xu \emph{et~al.}, ``Bitxhub: side-relay chain based
  heterogeneous blockchain interoperable platform,'' \emph{Comput Sci},
  vol.~47, no.~06, pp. 300--308, 2020.

\bibitem{yin2023sidechains}
L.~Yin, J.~Xu, K.~Liang, and Z.~Zhang, ``Sidechains with optimally succinct
  proof,'' \emph{IEEE Transactions on Dependable and Secure Computing},
  vol.~21, no.~4, pp. 3375--3389, 2024.

\bibitem{garoffolo2020zendoo}
A.~Garoffolo, D.~Kaidalov, and R.~Oliynykov, ``Zendoo: A zk-snark verifiable
  cross-chain transfer protocol enabling decoupled and decentralized
  sidechains,'' in \emph{2020 IEEE 40th International Conference on Distributed
  Computing Systems (ICDCS)}.\hskip 1em plus 0.5em minus 0.4em\relax IEEE,
  2020, pp. 1257--1262.

\bibitem{xie2022zkbridge}
T.~Xie, J.~Zhang, Z.~Cheng, F.~Zhang, Y.~Zhang, Y.~Jia, D.~Boneh, and D.~Song,
  ``zkbridge: Trustless cross-chain bridges made practical,'' in
  \emph{Proceedings of the 2022 ACM SIGSAC Conference on Computer and
  Communications Security}, 2022, pp. 3003--3017.

\bibitem{androulaki2018hyperledger}
E.~Androulaki, A.~Barger, V.~Bortnikov, C.~Cachin, K.~Christidis, A.~De~Caro,
  D.~Enyeart, C.~Ferris \emph{et~al.}, ``Hyperledger fabric: a distributed
  operating system for permissioned blockchains,'' in \emph{Proceedings of the
  thirteenth EuroSys conference}, 2018, pp. 1--15.

\bibitem{chainmaker}
\BIBentryALTinterwordspacing
{ChainMaker}. [Online]. Available: \url{https://chainmaker.org.cn/}
\BIBentrySTDinterwordspacing

\bibitem{david2018ouroboros}
B.~David, P.~Ga{\v{z}}i, A.~Kiayias, and A.~Russell, ``Ouroboros praos: An
  adaptively-secure, semi-synchronous proof-of-stake blockchain,'' in
  \emph{EUROCRYPT 2018: 37th Annual International Conference on the Theory and
  Applications of Cryptographic Techniques}, 2018, pp. 66--98.

\bibitem{muzammal2019renovating}
M.~Muzammal, Q.~Qu, and B.~Nasrulin, ``Renovating blockchain with distributed
  databases: An open source system,'' \emph{Future generation computer
  systems}, vol.~90, pp. 105--117, 2019.

\bibitem{lu2021blockchain}
J.~Lu, J.~Shen, P.~Vijayakumar, and B.~B. Gupta, ``Blockchain-based secure data
  storage protocol for sensors in the industrial internet of things,''
  \emph{IEEE Transactions on Industrial Informatics}, vol.~18, no.~8, pp.
  5422--5431, 2021.

\bibitem{zhou2022fair}
L.~Zhou, A.~Fu, G.~Yang, Y.~Gao, S.~Yu, and R.~H. Deng, ``Fair cloud auditing
  based on blockchain for resource-constrained iot devices,'' \emph{IEEE
  Transactions on Dependable and Secure Computing}, vol.~20, no.~5, pp.
  4325--4342, 2023.

\bibitem{boneh2001short}
D.~Boneh, B.~Lynn, and H.~Shacham, ``Short signatures from the weil pairing,''
  in \emph{International conference on the theory and application of cryptology
  and information security}.\hskip 1em plus 0.5em minus 0.4em\relax Springer,
  2001, pp. 514--532.

\bibitem{zhao2019practical}
Y.~Zhao, ``Practical aggregate signature from general elliptic curves, and
  applications to blockchain,'' in \emph{ACM asia conference on computer and
  communications security}, 2019, pp. 529--538.

\bibitem{boneh2018verifiable}
D.~Boneh, J.~Bonneau, B.~B{\"u}nz, and B.~Fisch, ``Verifiable delay
  functions,'' in \emph{Annual international cryptology conference}.\hskip 1em
  plus 0.5em minus 0.4em\relax Springer, 2018, pp. 757--788.

\bibitem{huang2023scheduling}
H.~Huang, X.~Peng, Y.~Lin, M.~Xu, G.~Ye, Z.~Zheng, and S.~Guo, ``Scheduling
  most valuable committees for the sharded blockchain,'' \emph{IEEE/ACM
  Transactions on Networking}, vol.~31, pp. 3284--3299, 2023.

\bibitem{zhai2024secret}
M.~Zhai, Q.~Wu, Y.~Liu, B.~Qin, X.~Dai, Q.~Gao, and W.~Susilo, ``Secret
  multiple leaders \& committee election with application to sharding
  blockchain,'' \emph{IEEE Transactions on Information Forensics and Security},
  vol.~19, pp. 5060--5074, 2024.

\bibitem{boehmer2024approval}
N.~Boehmer, M.~Brill, A.~Cevallos, J.~Gehrlein, L.~S{\'a}nchez-Fern{\'a}ndez,
  and U.~Schmidt-Kraepelin, ``Approval-based committee voting in practice: a
  case study of (over-) representation in the polkadot blockchain,'' in
  \emph{Proceedings of the AAAI Conference on Artificial Intelligence},
  vol.~38, no.~9, 2024, pp. 9519--9527.

\bibitem{castro1999practical}
M.~Castro, B.~Liskov \emph{et~al.}, ``Practical byzantine fault tolerance,'' in
  \emph{OSDI}, vol.~99, 1999, pp. 173--186.

\bibitem{boneh2003aggregate}
D.~Boneh, C.~Gentry, B.~Lynn, and H.~Shacham, ``Aggregate and verifiably
  encrypted signatures from bilinear maps,'' in \emph{EUROCRYPT 2003:
  International Conference on the Theory and Applications of Cryptographic
  Techniques}, 2003, pp. 416--432.

\bibitem{nick2021musig2}
J.~Nick, T.~Ruffing, and Y.~Seurin, ``Musig2: Simple two-round schnorr
  multi-signatures,'' in \emph{Annual International Cryptology
  Conference}.\hskip 1em plus 0.5em minus 0.4em\relax Springer, 2021, pp.
  189--221.

\bibitem{wesolowski2020efficient}
B.~Wesolowski, ``Efficient verifiable delay functions,'' \emph{Journal of
  Cryptology}, vol.~33, pp. 2113--2147, 2020.

\bibitem{ephraim2020continuous}
N.~Ephraim, C.~Freitag, I.~Komargodski, and R.~Pass, ``Continuous verifiable
  delay functions,'' in \emph{EUROCRYPT 2020: 39th Annual International
  Conference on the Theory and Applications of Cryptographic Techniques}, 2020,
  pp. 125--154.

\bibitem{boldyreva2002efficient}
A.~Boldyreva, ``Efficient threshold signatures, multisignature and blind
  signature schemes based on the gap-diffie-hellman-group signature scheme,''
  in \emph{PKC 2003}, 2002, pp. 31--46.

\bibitem{warner1995version}
A.~Warner and T.~F. Keefe, ``Version pool management in a multilevel secure
  multiversion transaction manager,'' in \emph{Proceedings 1995 IEEE Symposium
  on Security and Privacy}, 1995, pp. 169--182.

\bibitem{buchman2016tendermint}
E.~Buchman, ``Tendermint: Byzantine fault tolerance in the age of
  blockchains,'' Ph.D. dissertation, University of Guelph, 2016.

\bibitem{zamyatin2021sok}
A.~Zamyatin, M.~Al-Bassam, D.~Zindros, E.~Kokoris-Kogias, P.~Moreno-Sanchez,
  A.~Kiayias, and W.~J. Knottenbelt, ``Sok: Communication across distributed
  ledgers,'' in \emph{International Conference on Financial Cryptography and
  Data Security (FC)}.\hskip 1em plus 0.5em minus 0.4em\relax Springer, 2021,
  pp. 3--36.

\bibitem{garay2015bitcoin}
J.~A. Garay, A.~Kiayias, and N.~Leonardos, ``The bitcoin backbone protocol:
  Analysis and applications,'' \emph{Journal of the ACM}, 2015.

\bibitem{WeCross}
\BIBentryALTinterwordspacing
{WeCross}. [Online]. Available:
  \url{https://github.com/WeBankBlockchain/WeCross}
\BIBentrySTDinterwordspacing

\bibitem{booij2021ton_iot}
T.~M. Booij, I.~Chiscop, E.~Meeuwissen, N.~Moustafa \emph{et~al.}, ``{ToN\_IoT:
  The role of heterogeneity and the need for standardization of features and
  attack types in IoT network intrusion data sets},'' \emph{IEEE Internet of
  Things Journal}, vol.~9, no.~1, pp. 485--496, 2021.

\end{thebibliography}

\end{document}